\gdef\@copyrightpermission{
  \begin{minipage}{0.2\columnwidth}
   \href{https://creativecommons.org/licenses/by/4.0/}{\includegraphics[width=0.90\textwidth]{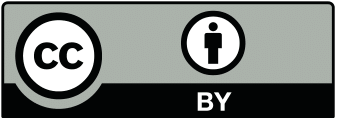}}
  \end{minipage}\hfill
  \begin{minipage}{0.8\columnwidth}
   \href{https://creativecommons.org/licenses/by/4.0/}{This work is licensed under a Creative Commons Attribution International 4.0 License.}
  \end{minipage}
  \vspace{5pt}
}
\title[AAMAS-2025 Formatting Instructions]{More Efficient Sybil Detection Mechanisms Leveraging Resistance of Users to Attack Requests}
\author{Ali Safarpoor Dehkordi}
\affiliation{
  \institution{Australian National University}
  \city{Canberra}
  \country{Australia}}
\email{ali.safarpoordehkordi@anu.edu.au}
\author{Ahad N. Zehmakan}
\affiliation{
  \institution{Australian National University}
  \city{Canberra}
  \country{Australia}}
\email{ahadn.zehmakan@anu.edu.au}
\begin{abstract}
We investigate the problem of
sybil (fake account) detection in social networks from a graph algorithms perspective, where graph structural information is used to classify users as sybil and benign. We introduce the novel notion of user resistance to attack requests (friendship requests from sybil accounts). Building on this notion, we propose a synthetic graph data generation framework that supports various attack strategies. We then study the optimization problem where we are allowed to reveal the resistance of a subset of users with the aim to maximize the number of users which are discovered to be benign and the number of potential attack edges (connections from a sybil to a benign user). Furthermore, we devise efficient algorithms for this problem and investigate their theoretical guarantees.
Finally, through a large set of experiments, we demonstrate that our proposed algorithms improve detection performance notably when applied as a preprocessing step for different sybil detection algorithms.
The code and data used in this work are publicly available on GitHub.\footnote{GitHub repository: \url{https://github.com/aSafarpoor/AAMAS2025-Paper/tree/main}}
\end{abstract}
\keywords{Sybil Detection, Social Networks, Algorithmic Graph Data Mining}
\definecolor{lightblue}{rgb}{0.68, 0.85, 0.9}
\newtheorem{theorem}{Theorem}[section] 
\newtheorem{definition}{Definition}[section] 
\newtheorem{observation}{Observation}[section] 
\begin{document}


\pagestyle{fancy}
\fancyhead{}


\maketitle 


\section{Introduction}
\textbf{Motivation.} Online social networks (OSNs) like Facebook, Instagram, and Twitter (now X) have become essential parts of our lives. They are places where we connect with friends and family, get the latest news, plan events, and even conduct business. Thus, they have revolutionized many fundamental aspects of our lives.

However, the rapid growth of OSNs has raised privacy and security concerns as malicious entities exploit them for identity theft and spreading harmful content. A key element is sybils (also referred to as “fake”, “malicious”, and “fraudulent” accounts), used for spamming, malware, and phishing, causing financial and reputational damage~\cite{ramalingam2018fake,adewole2017malicious}. Thus, detecting sybils is crucial for a safer online environment.

\textbf{Early Detection Mechanisms.} 
Detecting sybils can be conducted in different stages of the formation. 
The first stage is during registration, 
by using data like IP, Captcha, and location~\cite{yuan2019detecting,liang2021unveiling},
The next stage to detect sybils that pass through the registration is based on their first actions after joining the network, such as the ratio of accepted friend requests, the randomness of the requests sent, or the total number of requests, cf.~\cite{breuer2020friend, breuer2023preemptive}.

\textbf{Late Detection Mechanisms.} While many sybils can be caught using early detection mechanisms, more sophisticated attackers can bypass these by mimicking benign users. Numerous studies have focused on identifying these hard-to-find accounts. Some research analyzes the content provided by users to detect fake or malicious activities, cf. \cite{uppada2022novel}.
Other studies~\cite{ercsahin2017twitter,le2020application} examine user profiles, considering factors such as profile pictures, the number of followers and followings, or the volume of activities like the number of posts. While these methods are more in-depth than the early detection mechanisms, they may still fail to identify sybils that sophisticatedly mimic benign users' behavior.

\textbf{Graph Based Detection Mechanisms.} While sybils have a lot of power over manipulating information such as their IP, username, number of posts, and profile pictures, which are used by the aforementioned methods, they have very little control over their position with respect to the overall network structure. Thus, a natural question arises: can network structural information be used to detect sophisticated sybil attacks? More precisely, suppose we are given a graph $G$, corresponding to a social network, where some nodes (users) are labeled sybil or benign. The goal is to label the remaining nodes as accurately as possible. Prior work has leveraged various methods such as random walk~\cite{jia2017random}, belief propagation~\cite{wang2017sybilscar}, and graph neural networks (GNN)~\cite{yang2020fake}. 

\textbf{Shortcomings of Exiting Methods.} The existing algorithms~\cite{yu2006sybilguard,wang2017gang,lu2023sybilhp} are usually designed under the \textit{homophily} assumption and very limited number of attack edges, that is, most edges (links) in the graph are between nodes of the same type (sybil-to-sybil or benign-to-benign) with significantly fewer edges in between (sybil-to-benign or benign-to-sybil). Due to the lack of real-world data, these algorithms are usually tested on synthetic graph models, which are tailored to possess such homophily properties. However, based on current research~\cite{xu2024revisiting},
we know that the homophily property is not satisfied in many real-world examples.

\textbf{Missing Piece of Puzzle.} One key aspect that steers the formation of connection in a network is how benigns react to friendship requests from sybils. We introduce the notion of user resistance, where a resistant user rejects such requests while a non-resistant accepts them. (For simplicity, we suppose a user is either resistant or non-resistant, but our results can be extended to the setup where nodes have various degrees of resistance.) The notion of resistance permits us to design a more dynamic graph generation framework where, unlike prior static models~\cite{yu2008sybillimit,gong2014sybilbelief}, the outcome graph is a function of the attack strategy and resistance of users rather than a simplified pre-assumption such as homophily. We then devise effective and efficient mechanisms to leverage resistance information for finding potential attack edges and benigns, which serve as helpful preprocessing steps for sybil detection algorithms.

\textbf{Attack Models.}
It is commonly assumed, cf.~\cite{lu2023sybilhp}, that an attacker connects its created sybil accounts together in a fashion that it mimics benigns, for example, by copying the connections between a set of benigns with the same size.

The vital question is how the connections between sybils and the rest of the network are formed. Prior models~\cite{wang2017sybilscar} implant these connections following a presumed structural property such as homophily. However, in reality, these connections are a function of two variables: (i) the attacker's strategy to send connection requests and (ii) whether a request that has been sent is accepted by the corresponding benign, which is determined by their resistance. Following this natural observation, we introduce a novel generic modeling framework and three potential attack strategies. This creates a powerful test bed for sybil detection mechanisms.

\textbf{Potential Attack Edges.} As mentioned, most sybil detection algorithms function well when the homophily property holds. Still, as observed in~\cite{gong2014sybilbelief}, this is often not the case, especially when the attacker has a powerful strategy and users are non-resistant. Thus, it would be beneficial to determine potential attack edges (sybil-to-benign connections). 
Then, they can be removed or receive less weights to enhance the homophily and consequently the performance of sybil detection algorithms.

Incoming edges for non-resistant benigns are \textit{potential} attack edges. The issue is that, in reality, we don't know which nodes are resistant/non-resistant. However, we might have an estimate of whether a node is resistant (e.g., based on their number of connections, interaction behaviors, and so on). Furthermore, we can determine whether a user is resistant by sending a sequence of requests from a dummy sybil account. However, we don't want to bombard the whole network with such dummy sybil requests. Thus, we suppose we have a fixed budget $k$ of the number of users whose resistance can be revealed. We formulate this as an optimization problem and provide an optimal linear time algorithm. 

\textbf{Discovering Benigns.} If we know that a node $v$ is benign and resistant, we can conclude that a node $u$ with an edge \textit{to} $v$ is also benign. Now, if $u$ is known to be resistant, we can conclude that a node $w$, which has an edge to $u$, is benign too, and so on. How many new benigns can be discovered if we are allowed to reveal the resistance of $k$ nodes? We prove that this problem is computationally ``hard''. However, we propose a greedy-based and traversing algorithm, which turns out to be very performant based on experiments on real-world graph data. In addition to potential attack edge discovery from above, expanding the benign set can serve as an important preprocessing step for sybil detection algorithms.

\textbf{Enhancing Detection Algorithms by Preprocessing.} Based on the contributions mentioned above, we investigate the state-of-the-art detection methods, including SybilSCAR~\cite{wang2017sybilscar}, SybilWalk~\cite{jia2017random}, and SybilMetric~\cite{asghari2022using}, with and without preprocessing steps conducted by our resistance-based mechanisms. We first examine the performance of these methods on several real-world social networks incorporated into our synthetic framework under three different attack strategies. We then gauge the performance when our potential attack edge and benigns discovery mechanisms are applied as a preprocessing step. They prove to be very impactful in enhancing the accuracy performance.

\textbf{Outline.} In the rest of this section, we provide some basic definitions and an overview of some prior work. Our data generation framework, accompanied by attack models, is provided in Section~\ref{sec:attack-model}. The maximizing benigns and potential attack edges discovery mechanisms using resistance information are given in Section~\ref{sec:discover-benign} and~\ref{sec:pae}. Finally, the experimental results of our proposed mechanisms are discussed in Section~\ref{sec:exp}.

\subsection{Preliminaries}

\textbf{Graph Definition.} A social network is represented by a directed graph $G=(V,E)$ with node set $V$ and edge set $E \subseteq V \times V$. We have $|V|=n$ and $|E|=m$. Nodes correspond to users; edges represent connections, such as following (directed) or friendship (bidirectional).

We define $\Gamma_{\text{in}}(v) := \{u : (u,v) \in E\}$
and
$\Gamma_{\text{out}}(v) := \{u : (v,u) \in E\}$ to be respectively the set of incoming and outgoing neighbors of a node $v$. Then, we define $d_{\text{in}}(v) := |\Gamma_{\text{in}}(v)|$,
$d_{\text{out}}(v) := |\Gamma_{\text{out}}(v)|$
and
$d(v) := d_{\text{in}}(v) + d_{\text{out}}(v)$. Let $\Delta_{\text{in}} = \max \left\{ d_{\text{in}}(v) \mid v \in V \right\}$ denote the maximum incoming degree and $\Delta_{\text{out}} = \max \left\{ d_{\text{out}}(v) \mid v \in V \right\}$. 
We show an edge from $u$ to $v$ by $e_{uv}$, and if the edge is undirected, we show that by $\bar{e}_{uv}$. In addition, for two subsets $V_1,V_2 \subseteq V$ we define $\partial(V_1, V_2) := \{e_{vu} \in E: v \in V_1 \wedge  u \in V_2 \}$.
A path in a graph $G = (V, E)$ is a sequence of nodes $v_1, v_2, \ldots, v_k$ where each adjacent pair is connected by an edge, i.e., $(v_i, v_{i+1}) \in E$ for $1 \leq i \leq k-1$.

\textbf{Benign and Sybil Classification Problem.} In this paper, we aim to provide preprocessing algorithms which help mechanisms solve the following problem. 
\tcbset{colback=lightblue, colframe=blue, 
        width=\columnwidth, boxrule=0.5pt, arc=3mm, 
        left=1mm, right=1mm, top=1mm, bottom=1mm}

\textsc{\textbf{Benign and Sybil Classification (BSC) Problem}}\\
\textbf{Input:} Given a graph $ G = (V, E) $ where nodes are labeled as \textit{Benign}, \textit{Sybil}, or \textit{Unknown}, partitioned into subsets
 $ B $, $ S $, and $ U $ respectively ($ B \cup S \cup U = V $ and $(B \cap S) \cup (B \cap U) \cup (S \cap U) = \emptyset$). 
 
\noindent
\textbf{Goal:} Maximize the number of correctly labeled nodes in $U$.

\textbf{User Resistance.} In the real world, each user can choose to accept or reject sybil requests. Some users are more cautious and reject such requests, while others might accept them. For simplicity, we assume each user, which is represented as a node in a graph, either accepts all sybil requests or rejects all of them. To model this, we define a binary value \( r: v \rightarrow \{0,1\} \) to represent a user's resistance to sybil requests, where $1$ means resistant (rejecting) and $0$ means non-resistant (accepting). 
Since, in reality, the value of $r(v)$ is not known to us, 
we suppose we know a probability function \( p_r(v) \) whose value is more likely to be closer to $1$ if $v$ is resistant and $0$ otherwise.
Such estimates can be achieved in practice by studying certain user features, such as the number of incoming/outgoing connections, the number of known sybil/benign neighbors, and content posted. In our paper, we suppose such a probability distribution is given to us as input.

\subsection{Related Work}
Some mechanisms aim to detect sybils at the early stages. This could be as early as registration time using techniques like captchas or analyzing data such as registration time and IP addresses, cf.~\cite{yuan2019detecting,liang2021unveiling}. The sybils that pass through these filters might be caught based on their first activities, such as friendship requests, cf.~\cite{breuer2023preemptive}. 

While such early detection mechanisms are useful in identifying a considerable fraction of sybils, more sophisticated attacks can deceive them by mimicking benign users. Thus, there has been a growing interest in leveraging network structure information to detect more sophisticated attacks, cf.~\cite{yoon2021graph,jia2017random}. This is because, unlike information such as username or IP, sybil users have very limited power to control their network structural properties. Below, we give an overview of various existing methods.

\textbf{Graph Metrics.} One natural approach is finding graph metrics that separate sybils and benigns. Asghari et al.~\cite{asghari2022using} analyzed the relationship between being sybil or benign based on metrics such as degree, betweenness, eigenvector centrality, clustering coefficient, and average shortest path length. Yoon~\cite{yoon2021graph} defined the graph accessibility metric for nodes to identify sybils. Jethava and Rao~\cite{jethava2022user} proposed a hybrid approach, combining user behavior analysis with graph-based techniques. They used metrics such as the Jaccard index and betweenness centrality.

\textbf{Random Walk.}
Random walk-based detection mechanisms have particularly become popular, cf.~\cite{boshmaf2015integro,zhang2022enhancing}. They usually rely on the premise that a random walk starting from a sybil user is more likely to encounter sybil users. SybilGuard~\cite{yu2006sybilguard} was one of the first methods to leverage random walks for sybil detection. SybilLimit~\cite{yu2008sybillimit} improved on this by enhancing scalability and providing a tighter bound on the number of accepted sybils.
SybilInfer~\cite{danezis2009sybilinfer} additionally used Bayesian inference and Monte Carlo simulations to provide a more robust mechanism.
SybilRank~\cite{cao2012aiding} proposed using a ranking algorithm that prioritizes nodes based on the degree-normalized probability of a short random walk from a non-sybil landing on them.
SybilWalk~\cite{jia2017random} improved random walk-based models by incorporating known benigns and sybils simultaneously. This model defines two extra nodes as label nodes and then connects known nodes to their respective label node.

\textbf{Belief Propagation.} Another popular approach, cf.~\cite{pearl1988probabilistic}, is to assign some initial probability of being sybil to each node (where known sybils get higher and known benigns get lower probabilities) and then use some updating mechanism, called belief propagation, to improve these probabilities. The idea is that a node's probability (belief) is updated as a function of the probabilities of its neighbors.
SybilBelief~\cite{gong2014sybilbelief} is a semi-supervised learning algorithm that uses belief propagation to detect sybils.
SybilSCAR~\cite{wang2017sybilscar} unified random walk and belief propagation methods, applying local rules iteratively to identify sybils.
GANG~\cite{wang2017gang} improved previous models by using directed graphs.
SybilFuse~\cite{gao2018sybilfuse} employed collective classification by training local classifiers to calculate trust scores first, then propagating these scores using weighted random walk and belief propagation to improve detection accuracy.

\textbf{Machine Learning.} Machine Learning (ML) methods for sybil detection usually involve feature extraction, model training, and classification. Some basic ML methods include SVM, Logistic Regression, and Random Forest, cf.\cite{van2018using, kondeti2021fake, le2020application}.
Furthermore, deep learning methods can process raw data and capture complex structures, making them ideal for sybil detection. Goyal et al.~\cite{goyal2023detection} used graph convolutional network (GCN) to learn structural features, while Borkar et al.~\cite{borkar2022real} employed recurrent networks for text content processing, followed by clustering for sybil detection. Recent advances leverage graph neural networks (GNNs) and attention mechanisms. Yang and Zheng~\cite{yang2020fake} used attention-based GNNs, Khan et al.\cite{khan2024graph} developed a GNN-based framework to analyze user profiles and connections, and Liu et al.\cite{liu2018heterogeneous} integrated diverse attributes using heterogeneous GNNs.

\section{Attack Models}
\label{sec:attack-model}

To effectively evaluate detection algorithms, it is essential to provide high-quality datasets. Many existing datasets suffer from a lack of proper labeling, as they are often labeled manually and are quite limited in size. To tackle this issue, some synthesized datasets have been provided~\cite{gong2014sybilbelief,lu2023sybilhp}
but they still fall short in various aspects. For example, \cite{gong2014sybilbelief} uses a synthesized model to generate the sybil area, which is less favorable than using real-world data. Furthermore, the existing models~\cite{lu2023sybilhp} make special assumptions on the strategy employed by the attacker and the structure of the edges between sybil and benign parts, for example, uniform random edges. They then leverage these structural properties to devise algorithms that are tailored for this setting. However, the proposed algorithms should ideally be agnostic of the attack strategy deployed by the attacker since the attacker can adopt various attack strategies. Moreover, the availability of diverse datasets generated by various attack strategies enhances the reliability of evaluation outcomes. This encourages us to synthesize datasets with different characteristics. We will see that the notion of user resistance, defined in this paper, forms an excellent foundation for synthesizing realistic datasets.

To synthesize a labeled network, one needs to define the nodes and their labels and the edge set, especially \textit{attack edges}, formed from sybils to benigns. An existing dataset of online social networks is usually used as the benign set, treating all individuals as benign~\cite{wang2017sybilscar}. For the sybil set, a common approach is that the attacker may replicate a subgraph from the real graph to secure an initial real-world structure. If attackers avoid using this strategy, their presence becomes readily detectable through the analysis of simple graph features.

Following the above description, the sybil set $S$ will be created based on the chosen subset $B' \subset B$, where $B$ is the benign set. Then, the edges between nodes in $S$ are formed to copy the corresponding structure in $B'$.
Each node $v \in B'$ has a corresponding copy in $S$ and vice versa; these nodes are referred to as \textit{dual} of each other.

Following a specific strategy, the attacker sends requests to the benigns. If a benign has resistance $1$, the request is rejected; otherwise, it is accepted.
Recall that we show the resistance of each node $v$ with $r(v)$, which is set to $0$ (non-resistant) or $1$ (resistant). For example, if the resistance of all nodes is set to 1, all the sent requests are rejected, and no edge is formed from $S$ to $B$. However, in reality, not all nodes are resistant.

Let's note that while our algorithms start from known benign nodes and do not consider sybil resistance, sybils can indeed have resistance in real-world scenarios. Attackers might program sybils to reject specific requests to mimic realistic behavior or avoid detection methods that use test requests. Additionally, sybils might reject requests from nodes suspected to be other sybils, especially without collaboration among attackers, to avoid higher homophily and a higher risk of being detected.

As stated, our proposed framework is not limited to a particular attack strategy and facilitates the study of various strategies. Below, we provide three natural approaches which may be used by the attacker. Before that, we explain some concepts which are shared among them.

Assume the number of \textbf{A}ttack \textbf{E}dges from node $s_i\in S$ to $B$ is $AE(i)$. As discussed above, we suppose that the attacker ``copies'' the subgraph among a subset of benigns $B'\subset B$ as $S$. To mimic a benign behavior further, one natural strategy is to ensure that each node in $S$ has as many edges to $B\setminus B'$ as its dual. More precisely, one aims to have $AE(i) =|\partial(\{dual(s_i)\}, B\setminus B')|$.

Furthermore, it is plausible to consider adding edges from the benign region ($B$) to the sybil region ($S$). Given the low likelihood that real users spontaneously connect to sybils, we posit that only nodes that have already accepted an edge from a sybil user might reciprocally connect back to it. Formally, consider each edge $(u, v) \in E$ where $u \in S$ and $v \in B$. We add the reverse edge $(v,u)$ with a certain probability, say $1/2$, for each such edge.

Below, we describe our proposed attack strategies. Please refer to Appendix~\ref{appendix:Algorithms of Attack Strategies} for a detailed pseudocode.

\textbf{Random Attack Strategy.} In this strategy, the attacker simply sends random requests to benigns. 
The number of requests from each node $i\in S$ is $c\cdot AE(i)$. For an appropriate choice of constant $c$, as a function of the average resistance in $B$, the expected number of attack edges from $i$ will be our desired value $AE(i) =|\partial(\{dual(s_i)\}, B\setminus B')|$ (the same applied to the strategies below).

\textbf{Preferential Attachment Attack Strategy.} The attacker might prefer to target nodes with the least resistance, especially those that have accepted more attack requests in the past. This also aligns with the \textit{power-law} behavior observed in real-world social networks. Thus, we leverage the preferential attachment model of network growth.
We use a modified version of the preferential attachment model based on the Barabási-Albert (BA) graph~\cite{barabasi1999emergence}, named \textit{Modified BA}. This takes into account both the original degree of each node and the number of accepted attack requests. This modification has been applied to take into account that the attacker is more likely to send requests to nodes that have already proven vulnerable (by accepting previous sybil requests).

\textbf{BFS Attack Strategy.} To enhance the difficulty of detecting sybil behavior, another natural approach is that each sybil attempts connections with the neighbors of its dual in the benign region. The acceptance of these connection requests, however, depends on the targeted node's resistance. To achieve the desired number of attack edges for each sybil node $s_i$ (that is, the aforementioned value of $AE(i)$), we process nodes using a BFS algorithm. For each node $s_i$, the BFS starts from $dual(s_i)$. 
If the number of nodes that BFS can reach is not enough, a preferential attachment attack strategy is utilized to complete the process.

\section{Preprocessing Algorithms}
\label{section:preprocessing}
In this part of the paper, we will study the problems of Maximizing Benigns
(in Section~\ref{sec:discover-benign}) and Discovering Potential Attack Edges (in Section~\ref{sec:pae}) leveraging the resistance probability information. We propose algorithms for solving these problems. As will be discussed in Section~\ref{sec:exp}, the outcome of these algorithms serves as a very valuable preprocessing step for detection algorithms.

\subsection{Maximizing Benigns}
\label{sec:discover-benign}

\textsc{\textbf{Maximizing Benigns (MB) Problem}}\\
\textbf{Input:} Given $G = (V, E)$, sets of benigns and sybils \(B\) and \(S\) such that $B \cap S = \emptyset$, budget $k$, and resistant probability distribution $p_r: V\rightarrow [0,1]$. 

\noindent
\textbf{Goal:} Maximize the number of newly discovered benigns by revealing the resistance of $k$ nodes.

An algorithm is permitted to reveal the resistance of nodes in a node set $A$ of size $k$ as a reveal set. Each selected node $u$ will be revealed to be resistant ($r(u)=1$) or not, with probability $p_r(u)$, independently. (In real life, we send a set of friend requests from some dummy sybils to a node $u$ and determine whether it is resistant, but our budget has a bound to avoid bombarding all users with such requests.) Then, the goal is to maximize the number of newly discovered benigns.

\begin{observation}
\label{obs:pathtoB_toDescovered}
A node $u$ in $V\setminus (B\cup S)$ can be newly discovered benign if and only if it has a path to a node in $B$ and each node $v$ on that path (except, potentially, $u$ itself) has revealed to have $r(v)=1$.
\end{observation}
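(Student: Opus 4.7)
My plan is to prove both directions of the biconditional by a direct chain-propagation argument, after first making precise the discovery rule that is implicit in the paragraph preceding the observation: if there is an edge $x \to y$ with $y$ already known or determined to be benign and $y$ has been revealed to satisfy $r(y) = 1$, then $x$ must also be benign. This follows because a resistant benign rejects every sybil friendship request, so an incoming edge at a resistant benign can only originate from a benign node.

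For the backward ($\Leftarrow$) direction, given a path $u = v_0 \to v_1 \to \cdots \to v_\ell$ with $v_\ell \in B$ and $r(v_i) = 1$ revealed for every $i \geq 1$, I would apply the rule by a descending induction on $i$ from $\ell - 1$ down to $0$. The base step uses $v_\ell \in B$ together with the revealed value $r(v_\ell) = 1$ to conclude that $v_{\ell - 1}$ is benign. Each subsequent step uses the freshly-derived benignness of $v_{i+1}$ together with the already-revealed $r(v_{i+1}) = 1$ to conclude that $v_i$ is benign, culminating in $u$. The ``except potentially $u$ itself'' caveat is automatic: the rule only consults the resistance of the \emph{target} of an edge, never the source, so $u$'s own resistance is never required.

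For the forward ($\Rightarrow$) direction, suppose $u \in V \setminus (B \cup S)$ is newly discovered benign. Then $u$ must have entered the set of determined benigns through some application of the discovery rule, which must use an outgoing edge $u \to v_1$ where $v_1$ is already known or determined benign and $r(v_1) = 1$ is revealed. I would then recurse on $v_1$: either $v_1 \in B$, ending the construction with a path of length $1$, or $v_1$ was itself newly discovered at a strictly earlier step, producing an edge $v_1 \to v_2$ with the same properties. Because each recursive step moves strictly earlier in the discovery order and there are only finitely many newly discovered nodes, the process terminates at some $v_\ell \in B$, yielding the required path.

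The main obstacle is not mathematical depth but notational care: I need to pin down a precise definition of ``newly discovered'' as the closure of $B$ under the propagation rule restricted to revealed-resistant nodes, and justify that the recursion in the forward direction is well-founded under a consistent discovery ordering so that no circular reasoning is possible. Once these conventions are fixed, both inductions reduce to essentially one-line arguments.
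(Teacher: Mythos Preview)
Your proposal is correct. The paper does not actually supply a proof of this statement: it is labeled an \emph{observation} and is treated as immediate from the informal discovery rule described just before the MB problem (``If we know that a node $v$ is benign and resistant, we can conclude that a node $u$ with an edge to $v$ is also benign \ldots\ and so on''). Your write-up formalizes exactly that rule and carries out the two inductions the paper leaves implicit, so there is no methodological divergence to speak of---you are simply filling in details the authors chose to omit.
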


\begin{definition} 
$f(A)$ is the expected number of discovered benigns upon revealing the nodes in the reveal set $A$.
\end{definition}

\subsubsection{Hardness Results}

We prove that our problem is computationally hard by a reduction from the Maximum Coverage Problem.

\begin{definition} [Maximum Coverage (MC) Problem]
Given set $W=\{w_1,w_2,\cdots,w_h\}$, collection of subsets $Q=\{q_1,q_2,\cdots,q_l\}$,  $\forall q_i \in Q: q_i \subseteq W$, and budget $k$, what is the maximum number of elements which can be covered by a set $A\subset Q$ of size $k$? We say a set $q_i$ covers an element $w_j$ if $w_j\in q_i$.
\end{definition}

\begin{theorem}[\cite{Fiege1998Athreshhold}]\label{thm:apx_mc}
There is no \((1-\frac{1}{e})\)-approximation polynomial time algorithm for the MC problem unless \(\text{NP}\subseteq \text{DTIME}(n^{O(\log \log n)})\).
\end{theorem}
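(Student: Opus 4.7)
The plan is to follow Feige's classical proof strategy, which is a gap-producing reduction from a hard Label Cover instance into MC using a combinatorial gadget called a \emph{partition system}. The starting point is the PCP theorem together with Raz's parallel repetition: for any $\gamma > 0$, 3SAT reduces in time polynomial in $n$ and in $2^{1/\gamma}$ to a two-prover one-round Label Cover instance $\mathcal{L}$ with answer alphabet $[R]$ such that satisfiable 3SAT maps to $\mathcal{L}$ of value $1$ and unsatisfiable 3SAT maps to $\mathcal{L}$ of value at most $\gamma$. Choosing $\gamma = 1/\mathrm{polylog}(n)$ yields a reduction of quasi-polynomial size, which is precisely the source of the $n^{O(\log\log n)}$ qualifier in the conclusion.

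First, I would construct the combinatorial gadget. For each query $q$, introduce a partition system $P(q) = (U_q, \mathcal{P}_1, \dots, \mathcal{P}_R)$ where each $\mathcal{P}_a$ partitions $U_q$ into $L$ blocks, and such that the union of any choice of one block from each of $L$ distinct partitions covers at most a $(1 - 1/e + o(1))$ fraction of $U_q$. Existence of such partition systems with $|U_q|$ polynomial in $L$ and $\log R$ follows from a standard probabilistic construction with Chernoff-type concentration. The MC universe is then $W = \bigsqcup_q U_q$, and for each answer $a \in [R]$ to each query $q$ we add to the collection a set that picks one block of $\mathcal{P}_a$ in $U_q$ bundled with, for every Label Cover edge $e$ incident to $q$, the unique block in $U_{q'}$ whose index is compatible with $a$ under the projection $\pi_e$. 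The budget $k$ is set to the number of queries times $L$.

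For completeness, a satisfying labeling of $\mathcal{L}$ induces a consistent choice of one answer per query that yields, through the corresponding sets, $L$ coordinated blocks per $U_q$ and hence covers a $(1 - o(1))$ fraction of $W$. For soundness, I would argue contrapositively: if some $k$ sets cover more than a $(1 - 1/e + \varepsilon)$ fraction of $W$, then by averaging a constant fraction of the $U_q$'s are covered above the $(1 - 1/e + \varepsilon/2)$ threshold, which by the partition-system property forces the chosen sets at those queries to concentrate on a small list of answers. Sampling a single answer per query from this list produces a Label Cover labeling whose expected fraction of satisfied constraints exceeds $\gamma$, contradicting the soundness of $\mathcal{L}$.

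The main obstacle is quantitative calibration: the alphabet size $R$, number of blocks $L$, and budget $k$ must be tuned so that the $1 - 1/e$ threshold emerges \emph{exactly} in the soundness analysis while the gadget overhead stays quasi-polynomial. The sharp combinatorial core is showing that any $L$ blocks drawn from distinct partitions miss at least a $(1/e - o(1))$ fraction of $U_q$ \emph{for every} choice of blocks rather than just on average, which needs a union bound against exponentially many configurations and hence dictates the partition-system dimensions. Matching the reduction size against $\gamma = 1/\mathrm{polylog}(n)$ then yields the stated inapproximability unless $\mathrm{NP} \subseteq \mathrm{DTIME}(n^{O(\log \log n)})$.
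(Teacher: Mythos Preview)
The paper does not prove this theorem at all: it is stated with a citation to Feige~\cite{Fiege1998Athreshhold} and used as a black box in the reduction establishing Theorem~\ref{thm:hardness}. So there is no ``paper's own proof'' to compare against; the authors simply invoke the result.

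Your proposal is a reasonable high-level sketch of Feige's original argument (parallel-repeated Label Cover plus partition systems), and nothing in it is wrong as an outline. But for the purposes of this paper you are doing far more than is required: the intended ``proof'' here is a one-line citation, and all the actual work in this section goes into the transformer from MC to MB and the equality $OPT_{\text{MC}} = OPT_{\text{MB}}$.
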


Assume $I = \langle W, Q, k \rangle$ is an instance of MC. We define a transformation to convert \( I \) into $I'=\langle G,B,S,k,p_r(\cdot) \rangle$, which is an instance of MB.
Without loss of generality, we assume that each \( w_i \in W \) appears in at least one of the subsets \( q_j \in Q \) (otherwise, they could be simply ignored because they cannot be covered at all). Now, we present the transformer in the following. 

\textbf{Transformer.}
To define the transformer to convert \( I \) to \( I' \), we first need to construct the MC problem in graph space as $G=(V,E)$. We define $V=V_Q\cup V_W$, where $V_Q=\{v_{q_i}:  1\le i\le l\}$ and $V_W=\{v_{w_i}: 1\le i\le h\}$. Note that $V_Q \cap V_W = \emptyset$. We also define $E= \{(v_{w_i},v_{q_j}):w_i \in q_j$\}. In addition, $k$ is the same,
$S = \emptyset$, $B=V_Q$, and $p_r(v) = 1 $ for all nodes (which means we know all nodes are resistant). An example is provided in Appendix~\ref{appendix:Connection Between Optimal Solutions example}.

\textbf{Connection Between Optimal Solutions.}
For any arbitrary instance of the MC problem, let $OPT_{\text{MC}}$ denote the optimal solution, and similarly $OPT_{\text{MB}}$ for the MB problem. We claim that
\begin{equation}
    \label{eq:MC=MB}
    OPT_{\text{MC}} = OPT_{\text{MB}}
\end{equation}

First, let $A$ be an optimal solution for the MB problem. Note that $A\cap V_W=\emptyset$ because revealing a node in $V_W$ does not result in discovering any new benign. Consider that set $A_q$, which includes a set $q_i$ if and only if $v_{q_i}\in A$. If a node $v_{w_j}$ is newly discovered benign by revealing $A$, then there is a node $v_{q_i}$ such that $(v_{w_j},v_{q_i})\in E$ and $v_{q_i}\in A$. This implies that $w_j\in q_i$ and $q_i\in A_q$. Therefore, $w_j$ is covered by $A_q$. This implies that $OPT_{\text{MC}} \ge OPT_{\text{MB}}$.

It remains to prove that $OPT_{\text{MC}} \le OPT_{\text{MB}}$. Let $A_q\subset Q$ generate an optimal solution for the MC problem. Define $A$ to include a node $v_{q_i}$ if and only if $q_i\in A_q$. Consider an element $w_j$, which is covered by $A_q$, and let $q_i$ be a set which covers it. Then, $v_{q_i}\in A$ and $(v_{w_j},v_{q_i})\in E$. Thus, $w_j$ will be a newly discovered benign since $w_j\in V\setminus B$, and it has an edge to a node with resistance $1$. This implies that $OPT_{\text{MC}} \le OPT_{\text{MB}}$.

\textbf{Inapproximability.}
Assume there exists a polynomial-time \((1 - \frac{1}{e} )\)-approximation algorithm \(Alg_{\text{MB}}\), that solves the MB problem. For an instance of the MC problem, we can use the previously mentioned transformer to construct an instance of our problem in polynomial time. Then, we apply \(Alg_{\text{MB}}\) to solve the constructed instance, yielding a solution \(Sol_{\text{MB}}\). Using the same argument as previously described for the relationship between optimal solutions, we can show that 
$Sol_{\text{MC}} \geq Sol_{\text{MB}}$. Thus,
$Sol_{\text{MC}} \geq Sol_{\text{MB}} \geq (1-\frac{1}{e})OPT_{\text{MB}}$. With the help of Equation~\eqref{eq:MC=MB} we can conclude
$Sol_{\text{MC}} \geq (1-\frac{1}{e}) OPT_{\text{MC}}$

Therefore, we get an \((1 - \frac{1}{e} )\)-approximation algorithm for MC, which operates in polynomial time. However, according to Theorem~\ref{thm:apx_mc}, this is impossible unless 
\(\text{NP}\subseteq \text{DTIME}(n^{O(\log \log n)})\). Thus, we have the theorem below.

\begin{theorem}
\label{thm:hardness}
There is no \((1-\frac{1}{e})\)-approximation polynomial time algorithm for the MB problem unless \(\text{NP}\subseteq \text{DTIME}(n^{O(\log \log n)})\).
\end{theorem}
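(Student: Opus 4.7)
The plan is to give a gap-preserving polynomial-time reduction from the Maximum Coverage problem to MB, and then invoke Theorem~\ref{thm:apx_mc} to derive the hardness. The core idea is that by instantiating the image instance with $p_r \equiv 1$ and $S = \emptyset$, the probabilistic MB objective collapses to a deterministic covering problem on a bipartite graph, with sets in $Q$ playing the role of reveal candidates and elements in $W$ playing the role of nodes to be discovered benign.

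First I would verify that the transformer $I = \langle W, Q, k\rangle \mapsto I' = \langle G, B, S, k, p_r \rangle$ runs in polynomial time: $G$ has $l + h$ nodes and $\sum_{i} |q_i| \le l h$ edges, so the construction is linear in the MC input size. Next I would prove both directions of $OPT_{\text{MC}} = OPT_{\text{MB}}$. For $OPT_{\text{MB}} \le OPT_{\text{MC}}$, I would first argue that in any optimal reveal set we may assume $A \subseteq V_Q$, since every out-neighbor of $v_{w_j}$ already lies in $V_Q = B$, so by Observation~\ref{obs:pathtoB_toDescovered} revealing a node in $V_W$ never completes a new resistant path to $B$ and thus discovers nothing; then $A_q := \{q_i : v_{q_i} \in A\}$ covers every $w_j$ whose corresponding $v_{w_j}$ is newly discovered. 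For the opposite direction, map an optimal $A_q$ to $A := \{v_{q_i} : q_i \in A_q\}$; since $p_r \equiv 1$ forces $r(v_{q_i}) = 1$ deterministically, each element $w_j \in q_i \in A_q$ yields a one-edge path $v_{w_j} \to v_{q_i} \in B$ with the intermediate resistance revealed as $1$, so $v_{w_j}$ is newly discovered. Finally I would close the argument by contradiction: a polynomial-time $(1-\tfrac{1}{e})$-approximation $Alg_{\text{MB}}$, composed with the transformer and the backward mapping of solutions, would yield a polynomial-time $(1-\tfrac{1}{e})$-approximation for MC, contradicting Theorem~\ref{thm:apx_mc}.

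The subtlety I expect to be the main obstacle is reconciling the probabilistic nature of $f(A)$ with the purely combinatorial MC objective: in general $f(A)$ is an expectation over independent Bernoulli reveals, but the transformer sidesteps this by setting $p_r \equiv 1$ so that every reveal returns $r(v) = 1$ with probability one and $f(A)$ becomes the exact count of newly discovered benigns. A related care-point is that the approximation ratio must translate exactly through the reduction; this is ensured because the correspondence $A \leftrightarrow A_q$ preserves objective value (not just optima), so $Sol_{\text{MC}} \ge Sol_{\text{MB}}$ for every feasible MB solution after the backward mapping, propagating the $(1-\tfrac{1}{e})$ factor intact.
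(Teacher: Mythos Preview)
Your proposal is correct and follows essentially the same approach as the paper: the same bipartite transformer with $B=V_Q$, $S=\emptyset$, and $p_r\equiv 1$, the same two-direction argument for $OPT_{\text{MC}}=OPT_{\text{MB}}$ (including the observation that reveals in $V_W$ are useless), and the same contraposition via Theorem~\ref{thm:apx_mc}. Your added remarks on why $p_r\equiv 1$ collapses the expectation to a deterministic count and on value-preservation (not merely optimum-preservation) make explicit points the paper leaves implicit.
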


\subsubsection{Greedy Approach}

For the MB problem, we can use the classical greedy algorithm, which iteratively reveals the node that maximizes the expected number of newly discovered benigns (an exact description is given in Appendix~\ref{appendix:mbgreedy}). According to Nemhauser et al.~\cite{nemhauser1978analysis}, if $f(\cdot)$ is non-negative, monotone and sub-modular, then this algorithm is $(1 - \frac{1}{e})$-approximation (which matches the bound from Theorem~\ref{thm:hardness}).

However, we prove that the objective function $f(\cdot)$, in fact, is \textit{not} submodular (please see Appendix~\ref{Appendix:Proof for Nonsubmodularity} for a proof). Thus, we cannot conclude $(1 - \frac{1}{e})$-approximation guarantee. On the positive side, the greedy algorithm has proven to be performant, cf.~\cite{bian2017guarantees,guo2019targeted}, even if the submodularity property doesn't hold. Thus, we study this further. However, it turns out that computing $f(\cdot)$ is \#P-hard, implying that the greedy algorithm is not polynomial time. 
To tackle this issue, we show that we can estimate $f(\cdot)$ with an arbitrarily small error parameter using the Monte Carlo method.

\paragraph{\#P-hardness.} We rely on a reduction from $a$-$b$ Connectedness for Induced Subgraphs Problem, which is known to be \#P-hard~\cite{valiant1979complexity}.

\noindent\textit{$a$-$b$ Connectedness for Induced Subgraphs (CIS):}
\\\textit{Input:} Given a directed graph $G=(V,E)$ and $a,b \in V$.
\\\textit{Goal:} What is the number of induced subgraphs of $G$ such that there is a path from $a$ to $b$.

\begin{theorem}\label{thm:sharpphard}
Computing $f(\cdot)$ is \#P-hard.
\end{theorem}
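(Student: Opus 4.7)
The plan is to establish \#P-hardness through a Turing reduction from the CIS problem that issues only two queries to $f$ and extracts the CIS count from their difference. Given a CIS instance $(G=(V,E),a,b)$, I would build two MB instances that share $B=\{b\}$, $S=\emptyset$, $p_r(a)=p_r(b)=1$, and $p_r(v)=\tfrac{1}{2}$ for every $v\in V\setminus\{a,b\}$. Instance $I_2$ uses the graph $G$ itself with reveal set $A_2=V$. Instance $I_1$ is obtained from $G$ by adding a single new vertex $u^{*}$ with exactly one outgoing edge $(u^{*},a)$ and no incoming edges, extending the probability distribution by $p_r(u^{*})=1$, and taking $A_1=V\cup\{u^{*}\}$.

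The central step is an \emph{isolation claim}: because $u^{*}$ has no incoming edges, no path from any vertex of $V$ to $b$ can traverse $u^{*}$. Since the remaining probabilities and graph structure on $V$ coincide in the two instances (in particular $p_r(a)=1$ in both, so Observation~\ref{obs:pathtoB_toDescovered}'s requirements on traversals through $a$ behave identically), the probability that a vertex $u\in V$ is newly discovered is the same in $I_1$ and $I_2$. Summing over $u$ then yields
\[
f(A_1)-f(A_2)=g(u^{*}),
\]
where $g(u^{*})$ denotes the probability that $u^{*}$ itself is newly discovered in $I_1$. Thus the difference of two $f$-values isolates a single discovery probability.

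To finish, I would compute $g(u^{*})$ directly. By Observation~\ref{obs:pathtoB_toDescovered}, every witness path from $u^{*}$ starts $u^{*}\to a\to v_1\to\cdots\to v_k=b$, and its non-$u^{*}$ vertices (including $a$, certain to be resistant, and $b$, also certain) must all be resistant. Letting $R$ denote the random subset of $V\setminus\{a,b\}$ that turns out resistant (each element included independently with probability $\tfrac{1}{2}$),
\[
g(u^{*})=\Pr\!\bigl[\exists\,a\to b\text{ path in }G[R\cup\{a,b\}]\bigr]=\frac{N}{2^{|V|-2}},
\]
where $N$ is exactly the CIS count for $(G,a,b)$. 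Rearranging, $N=2^{|V|-2}\bigl(f(A_1)-f(A_2)\bigr)$, so an exact polynomial-time algorithm for $f(\cdot)$ would solve CIS in polynomial time via two calls, contradicting the \#P-hardness of CIS~\cite{valiant1979complexity}.

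The delicate part of the plan is justifying the isolation claim rigorously: one must verify that inserting $u^{*}$ cannot alter any existing $g(u)$ for $u\in V$, which is exactly where the ``no incoming edges into $u^{*}$'' design is used, and that fixing $p_r(a)=1$ in \emph{both} instances keeps the resistance requirements along every path through $a$ identical across the two instances. Once that bookkeeping is in place, the algebra giving $N$ is immediate.
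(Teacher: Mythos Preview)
Your proposal is correct and is essentially the paper's own argument: the paper likewise reduces from CIS by adjoining a single vertex (called $a'$ there) with one outgoing edge to $a$, sets $B=\{b\}$, $p_r(a)=p_r(b)=1$, $p_r(v)=\tfrac12$ elsewhere, reveals all of $V$, and reads off the CIS count as $2^{|V|-2}$ times the difference of the two $f$-values. Your ``isolation claim'' is exactly the observation the paper uses (implicitly) that $a'$ has no incoming edges, so its presence does not alter any other node's discovery probability and the difference isolates $\Pr[a\text{ reaches }b]$; the only cosmetic discrepancy is that the paper leaves $a'$ out of the reveal set and leaves $p_r(a')$ arbitrary, which, as you note, is immaterial since $a'$ has no in-neighbors.
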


\begin{proof}

Suppose $G=(V, E)$ and $ a, b\in V$ are given as the input of CIS problem. We consider two instances of our problem to compute $f(\cdot)$. First consider graph $G=(V,E)$,
$B = \{b\}$, 
$S = \emptyset$,
$\left\{\begin{matrix}
r(v) = \frac{1}{2} & v \in V \setminus \{a,b\}\\ 
r(v) = 1 & v \in \{a,b\}
\end{matrix}\right.$
, and 
reveal set $A = V$. The second case is identical to the first case, except we add a node $a'$ and add an edge from $a'$ to $a$. Let's call this graph $G'=(V',E')$, where $V'=V\cup \{a'\}$. Furthermore, again $B=\{b\}$, $S=\emptyset$, the values of $r(v)$ are the same for all nodes, and $r(a')$ is any arbitrary value (its choice doesn't impact our argument), and $A=\{V\}$. See Figure~\ref{fig:sharpphard} for a visualization.

\begin{figure}[htbp]
    \centering
    \includegraphics[width=0.95\linewidth]{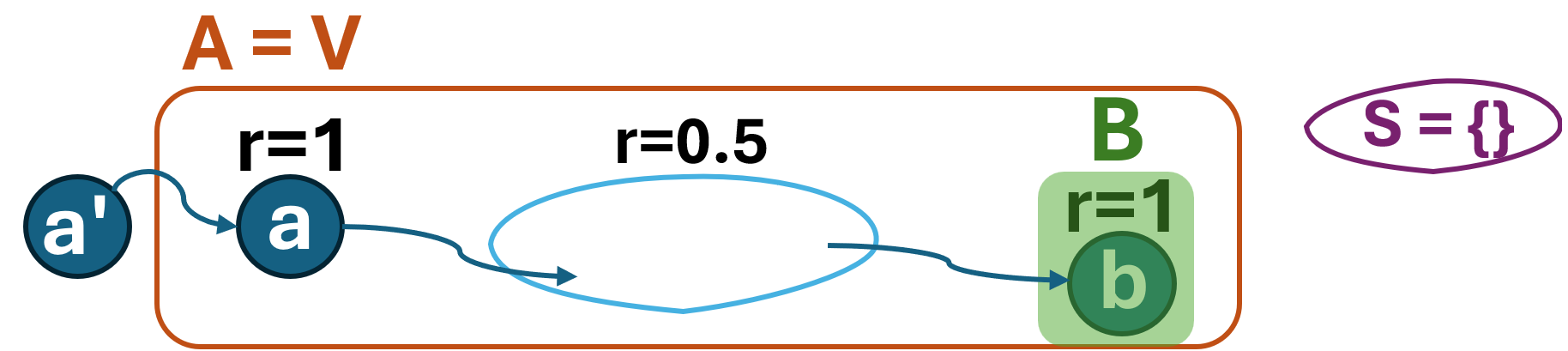}
    \caption{The construction used in the proof of \#P-hardness.}
    \Description{The configuration of nodes and their resistance is provided. Three main nodes are $a'$, $a$, and $b$. This figure is for the proof of \#P-hardness.}
    \label{fig:sharpphard}
\end{figure}

Let $f_{-a'}(A)$ and $f_{+a'}(A)$ denote the value of $f(A)$, the expected number of discovered benigns, in the first and second case, respectively. Firstly, we observe that $a'$ is discovered to be benign if and only if $a$ is discovered to be benign. This implies that $f_{+a'}(A)-f_{-a'}(A)$ is equal to the probability of the node $a$ being discovered to be benign. Secondly, note that each of the nodes in $V\setminus \{a,b\}$ is revealed to be resistant or not with probability $1/2$, independently. Thus, based on Observation~\ref{obs:pathtoB_toDescovered}, the probability that $a$ is revealed to be benign is the same as having a path from $a$ to $b$ once we remove each node in $V\setminus \{a,b\}$ with probability $1/2$. Combining the aforementioned two points, we can conclude that $f_{+a'}(A)-f_{-a'}(A)$ is equal to the fraction of all $2^{n-2}$ possible induced subgraphs obtained from removing nodes in $V\setminus \{a,b\}$ in graph $G$, where there is a path from $a$ to $b$.

So far we concluded that the solution to CIS problem is equal to $(f_{+a'}(A)-f_{-a'}(A))\times 2^{n-2}$. Since the provided transformer is polynomial time and CIS is \#P-hard, we conclude that the problem of computing $f(\cdot)$ is \#P-hard as well.  
\end{proof}

\paragraph{Monte Carlo Greedy Algorithm}

Algorithm~\ref{algorithm:montecarlo} modifies the classical greedy approach by using the Monte Carlo method to estimate $f(\cdot)$ since, as we proved, its computation is \#P-hard. The function \texttt{EST($\cdot$)} runs the revealing process $R$ times and computes the number of discovered benigns each time. Then, the average values found are returned as an estimation of the original expectation.

As part of the function \texttt{EST($\cdot$)}, to compute the number of discovered benigns based on $r(\cdot)$ and the reveal set $A$, we apply a BFS starting from known benigns in $A$. During the BFS, only nodes with resistance $1$ (starting from $A$) are added to the queue, marking their neighbors as discovered. This ensures every discovered node has a path of nodes with resistance $1$ to an initial benign, meeting the criterion in Observation~\ref{obs:pathtoB_toDescovered}. The pseudocode of this algorithm is provided in Appendix~\ref{appendix:BFS-based Discovery Algorithm}.

\begin{algorithm}[htbp]
\caption{Monte Carlo Greedy Algorithm}
\label{algorithm:montecarlo}
\raggedright  \textbf{Input} $G=(V,E),B, S, \text{budget }k,p_r(\cdot),\epsilon, \alpha$
\\
\raggedright  \textbf{Output} $reveal\_set\ A$
\\

\begin{algorithmic}[1]

\State Initialize $A \gets \emptyset$

\For{$i = 1$ to $k$}
    \State $v \gets \arg\max_{w \in V\setminus A} \Call{Est}{A \cup \{w\},p_r(\cdot),B,S, \epsilon,\alpha}$
    \State $A \gets A \cup \{v\}$
\EndFor
\State \Return $A$

\item[]
\Function{Est}{$A, p_r(\cdot), B, S, \epsilon, \alpha$}
    \State Initialize $count \gets 0$
    \State $R \gets \left\lceil  \frac{k^2\Delta_{\text{in}}^2 \ln(\frac{1}{1-\alpha})}{2\epsilon^2} \right\rceil$
    \For{$R$ iterations} 
        \State Assign resistance $r(v)$ for each $v \in A$ based on $p_r(v)$
        \State $new\_benign \gets$ number of discovered benigns based on $B$, $S$, and assigned resistances.   
        \State $count \gets count + new\_benign$ 
    \EndFor
    \State \Return $count / R$
    
\EndFunction

\end{algorithmic}
\end{algorithm}

Furthermore, using Hoeffding's inequality~\cite{hoeffding1994probability}, one can achieve arbitrary error margin $\epsilon$ and gain confidence $\alpha$ by setting the number of iterations $R \ge \frac{k^2\Delta_{\text{in}}^2 \ln(\frac{1}{1-\alpha})}{2\epsilon^2}$. 
The time complexity of the algorithm is $O(k \cdot R \cdot (n^2 + nm))$. 
One can check that, for constant $\epsilon$ and $\alpha$, the run time is polynomial.
Please refer to Appendix~\ref{appendix:hoeffding Proof of bound} and~\ref{appendix:complexity discussion} for more details on these two points.

\subsubsection{Proposed Traversing Algorithm}
\label{subsection:Proposed Traversing Algorithm}

We propose an algorithm that iteratively selects the node expected to reveal the maximum number of newly discovered benigns. More precisely, we keep a set of nodes $N$ such that if a node in $N$ is revealed to be resistant, then we can conclude all its incoming neighbors are benign. Based on Observation~\ref{obs:pathtoB_toDescovered}, these nodes already have a path to a node in $B$, and all nodes on the path are revealed to be benign. Thus, initially, $N$ is simply $B$, but it is updated as the algorithm reveals more resistant nodes. However, among all the incoming neighbors of a node $v$, we are only interested in the ones that will be ``newly'' discovered benign. Thus, we define $\hat{\Gamma}(v)$, which excludes unrelated nodes such as the already discovered ones (that is, $B\cup S$) or already discovered benigns. We continuously update $\hat{\Gamma}(v)$ as more nodes are revealed and added to $A$. More precisely, for each node $v$, if $r(v)$ is revealed to be 1, then we must consider every in-neighbor $u$ of $v$, and for each out-neighbor $w$ of $u$, we then remove $u$ from the set of in-neighbors of $w$. This is because $u$ has already been discovered to be benign. The pseudocode of this algorithm is provided in Algorithm~\ref{algorithm:TraversingResistanceDegreeAware}.

The \texttt{For} loop runs in $O(n+m)$. The \texttt{While} loop is computed $k$ times, and the lines 7-9 can clearly be executed in $O(n)$. The complexity of the \texttt{If} statement part is $O(\Delta_{\text{in}} \cdot \Delta_{\text{out}})$. This is achieved by using the appropriate data structure (please refer to Appendix~\ref{appendix:TraversingResistanceDegreeAware} for more details). Thus, the time complexity of this algorithm is $O((n+m)+ k\cdot (n+\Delta_{\text{in}} \cdot \Delta_{\text{out}}))$, which can be bounded by $O(kn^2)$. 

\begin{algorithm}[htpb]
\caption{Traversing Algorithm}
\label{algorithm:TraversingResistanceDegreeAware}
\raggedright  \textbf{Input} $G=(V,E), B, S,\text{budget }k,p_r(\cdot)$
\\
\raggedright  \textbf{Output} $reveal\_set\ A$
\begin{algorithmic}[1]
\State $A \gets \emptyset$
\State $ N \gets B$
\For{$v \in V$}
\State $\hat{\Gamma}_{\text{in}}(v) \gets \Gamma_{\text{in}}(v) \setminus (B \cup S)$
\State $\gamma_{\text{in}}(v) = |\hat{\Gamma}_{\text{in}}(v)|$
\EndFor

\While{$|A|<k$}
    \State pick $v$ with highest $p_r(v) \cdot \gamma_{\text{in}}(v)$ between nodes in $N$
    \State $A \gets A \cup \{v\}$
    \State $N \gets N \setminus \{v\}$
    \If{$r(v)=1$}
        \State $N \gets N \cup \hat{\Gamma}_{\text{in}}(v)$
        \For{$u \in \hat{\Gamma}_{\text{in}}(v)$}
            \For{$w \in \Gamma_{\text{out}}(u)$}
                \State $\hat{\Gamma}_{\text{in}}(w) \gets \hat{\Gamma}_{\text{in}}(w) \setminus \{u\}$
                \State $\gamma_{\text{in}}(w) \gets \gamma_{\text{in}}(w) - 1$
            \EndFor
        \EndFor
    \EndIf
\EndWhile
\State \Return $A$

\end{algorithmic}
\end{algorithm}

\subsection{Discovering Potential Attack Edges}
\label{sec:pae}
\begin{definition} An edge $e=(u,v)$ is a \textbf{P}otential \textbf{A}ttack \textbf{E}dge (PAE) if $u \in V \setminus (B \cup S)$, $v \in B$, and $r(v)=0$. In other words, it is an incoming edge to a benign $v$ with $r(v)=0$.
\end{definition}


\textsc{\textbf{Discovering Potential Attack Edges Problem}}\\
\textbf{Input:}
Given $G = (V, E)$, sets of benigns and sybils \(B\) and \(S\) such that $B, S \subseteq V$, $B \cap S = \emptyset$, budget $k$, and probability of resistant $p_r: V\rightarrow [0,1]$.

\noindent
\textbf{Goal:} Maximize expected number of discovered potential attack edges by revealing $r(v)$ of $k$ nodes from $B$.

For each node $v$, we know $d_{\text{in}}(v)$ and $p_r(v)$, and $(1-p_r(v)) \cdot |\Gamma_{\text{in}}(v) \setminus (B\cap S)|$ is the expected number of edges which will be discovered by revealing the node $v$. Since there is no overlap between edges discovered by each revealed node, choosing $k$ nodes with the highest value of $(1-p_r(v)) \cdot |\Gamma_{\text{in}}(v) \setminus (B\cap S)|$ is the optimal solution. This is outlined in Algorithm~\ref{pseudocode:RPAE}. For choosing the top $k$ nodes based on the computed values, we can use the median of medians algorithm~\cite{BLUM1973Time}, which runs in $\mathcal{O}(n)$. Thus, the overall time complexity is linear.

\begin{algorithm}[htbp]
\caption{Proposed Algorithm: Select Top $k$ Nodes}
\label{pseudocode:RPAE}
\raggedright  
\textbf{Input} \textbf{Input} $G=(V,E), B, S,\text{budget }k,p_r(\cdot)$
\\
\raggedright  \textbf{Output} List of $k$ nodes

\begin{algorithmic}[1]

\State $node\_values \gets \emptyset$
\For{each node $v$ in $B$}
    \State  $\textit{value} \gets (1-p_r(v)) \cdot |\Gamma_{\text{in}}(v) \setminus (B\cap S)|$ 
    \State $node\_values \gets node\_values \cup \{(v, \textit{value})\}$ 
\EndFor
\State $top\_k\_nodes \gets \text{top $k$ nodes of } node\_values \text{ based on } value$ using median of medians algorithm
\State \Return \textit{top\_k\_nodes}
\end{algorithmic}
\end{algorithm}

As we will discuss, we can reduce the weight of discovered PAEs as a preprocessing step to assist sybil detection algorithms.

\section{Experiments}
\label{sec:exp}

After establishing our experiments setting in Sections~\ref{sec:exp-setup} and~\ref{sec:attack-strategies}, we test our proposed algorithms for maximizing benigns problem and potential attack edges problem in Section~\ref{sec:preprocessing}. Then, in Section~\ref{sec:classification}, we study the impact of applying our algorithms as a preprocessing step for various sybil/benign classification methods.

\subsection{Experimental Setup}
\label{sec:exp-setup}

We use real-world graph data as the benign set. Then, our attack strategies
are used to generate sybils and their connections to sybil/benigns. For the graph, the data used include Facebook, LastFM, and Twitter from SNAP~\cite{snapnets2014}.
In addition, we use, Pokec network
\footnote{\url{https://github.com/binghuiwang/sybildetection/blob/master/Directed_Pokec.rar} (accessed July 20, 2024)} 
(we use only the subgraph induced by benigns). Some statistics of these networks are presented in 
Appendix~\ref{appendix:Datasets Real world Statistics reports}. For undirected networks, we assume the edges in both directions exist.

After generating sybils and attack edges in each dataset, as explained in the next section, we randomly selected $2\%$ of benigns and an equal number of sybils for the training set. The remaining sybils were included in the test set, along with an equal number of benigns. 
This means the size of known benigns for Facebook, Pokec, LastFM, and Twitter datasets are $80$, $200$, $150$, and $200$, respectively.

\subsection{Attack Strategies}
\label{sec:attack-strategies}

To choose the parameters in our attack models, we rely on some real-world statistics. Facebook reports indicate that the proportion of sybils is around $16\%$~\cite{moore2023fake}.
A 2013 study also found that $10\%$ of Twitter users were fake~\cite{wagstaff2013twitter}. Thus, we choose the size of the sybil set to be around $10\%$ of the network in our attack strategies.

Furthermore, Vishwanath~\cite{vishwanath2018fake} conducted a study involving the creation of fake Facebook profiles and sending friend requests to students in a large university. It was observed that only $30\%$ of students declined friend requests from fake Facebook profiles, $52\%$ were undecided after two weeks, and $18\%$ accepted immediately. Based on these results, we chose $25\%$ of nodes to be non-resistant.

To compute the probability of resistance $p_r(v)$ for a node $v$ based on resistance $r(v)$, we generate a random number $x$ between $0$ and $1$ and then we use \(p_r(v) =  (1-r(v))x^3+r(v)(1-x^3) \). We choose this simple formula to introduce randomness while giving higher probabilities when \( r(v) = 1 \) and lower probabilities when \( r(v) = 0 \). For example, for $r(v)=1$, the probability of \( p_r(v) \geq 0.5 \) is $0.79$, and the average probability value is $0.75$. In addition, in all attack strategies, we set $c = 4$ (please refer to Section~\ref{sec:attack-model} for the parameters of the attack models) since we aim for a non-resistance ratio of $25\%$. This way, the number of accepted attacks for a sybil~$s_i$ in expectation will be $\frac{25}{100} \cdot c \cdot AE(i) = AE(i)$, which is our desired value. Some statistics on the outcome of the attack strategies are provided in the Appendix~\ref{appendix:Additional Datasets after Attacks Statistics reports}.

\subsection{Preprocessing}
\label{sec:preprocessing}
In this section, we evaluate our proposed algorithms for maximizing benigns and potential attack edges problems.

\subsubsection{Maximizing Benigns}

We compare our proposed algorithms, Monte Carlo Greedy (Algorithm~\ref{algorithm:montecarlo}) and Traversing (Algorithm~\ref{algorithm:TraversingResistanceDegreeAware}), against the following baseline methods.

\textbf{1. Random.}
Randomly select $k$ nodes from benign set $B$ to reveal.

\textbf{2. Highest-Resistance.}
Pick $k$ nodes with highest $p_r(\cdot)$ from $B$.

\textbf{3. Highest-Resistance-and-Degree.}
Pick $k$ nodes $v$ with highest $p_r(v).|\Gamma_{\text{in}}(v) \setminus (B \cup S)|$ from $B$.

Between our two proposed algorithms, Traversing reveals the resistance of each selected node and uses that outcome information when selecting the next nodes, while the Monte Carlo Greedy does not. To make this a fairer comparison, we also consider a Monte Carlo Greedy variant, which reveals each selected node's resistance. This is called Resistance Aware Monte Carlo Greedy (please refer to Appendix~\ref{Appendix:montecarloresistanceaware} for an exact description of this algorithm).

Figure~\ref {fig:mb_facebook} illustrates the performance of different algorithms on the Facebook dataset. (The high variance in the results of the Monte Carlo Greedy algorithms in this experiment is attributed to the fact that the run was performed only once.) We observe that our proposed algorithms, Traversing and two variants of Monte Carlo, significantly outperform other algorithms. Among our proposed algorithms, the Traversing algorithm performs better than the Monte Carlo Greedy algorithms. Furthermore, based on our experiments, the Traversing algorithm is substantially faster than the two Monte Carlo Greedy algorithms. For example, for the Facebook dataset and Preferential Attachment attack strategy, with a budget of $k = 30$, the Traversing algorithm completes in $119$ milliseconds, while the Monte Carlo Greedy algorithm takes $52$ minutes. Thus, the Traversing algorithm is not only more \textit{accurate}, but also \textit{significantly faster}. 
For other datasets, similar results are provided in the Appendix~\ref{appendix:Additional MB experiment reports}. 

\begin{figure*}[htbp]
    \centering
    \begin{subfigure}[b]{0.32\textwidth}
        \centering
        \includegraphics[width=\linewidth]{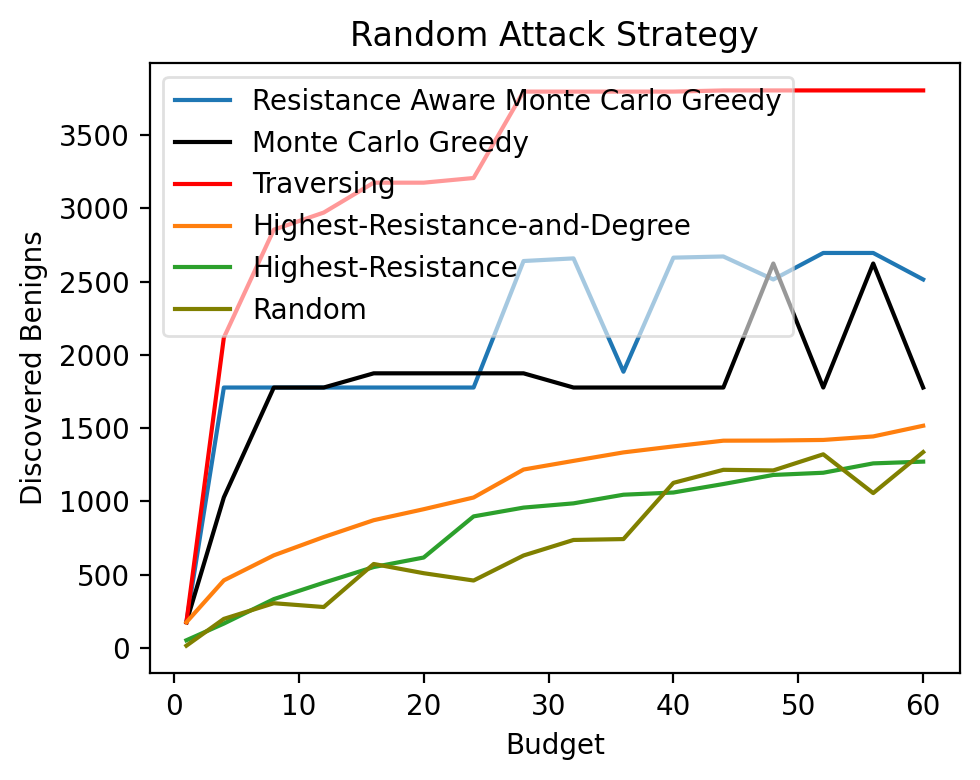}
    \end{subfigure}
    \hfill
    \begin{subfigure}[b]{0.32\textwidth}
        \centering
        \includegraphics[width=\linewidth]{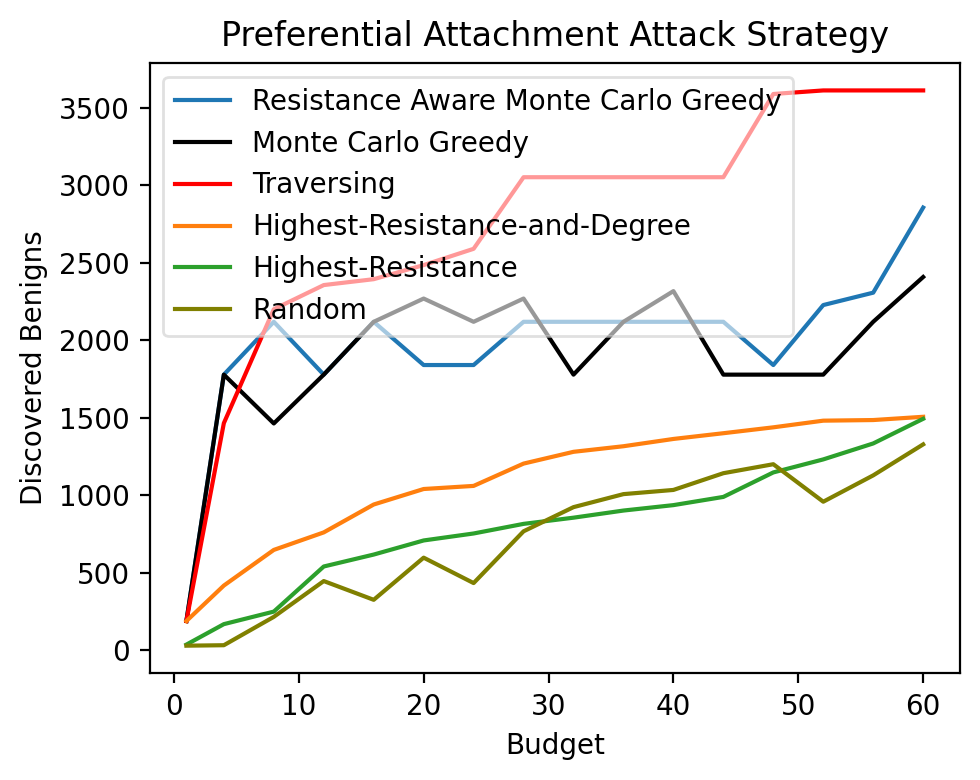}
    \end{subfigure}
    \hfill
    \begin{subfigure}[b]{0.32\textwidth}
        \centering
        \includegraphics[width=\linewidth]{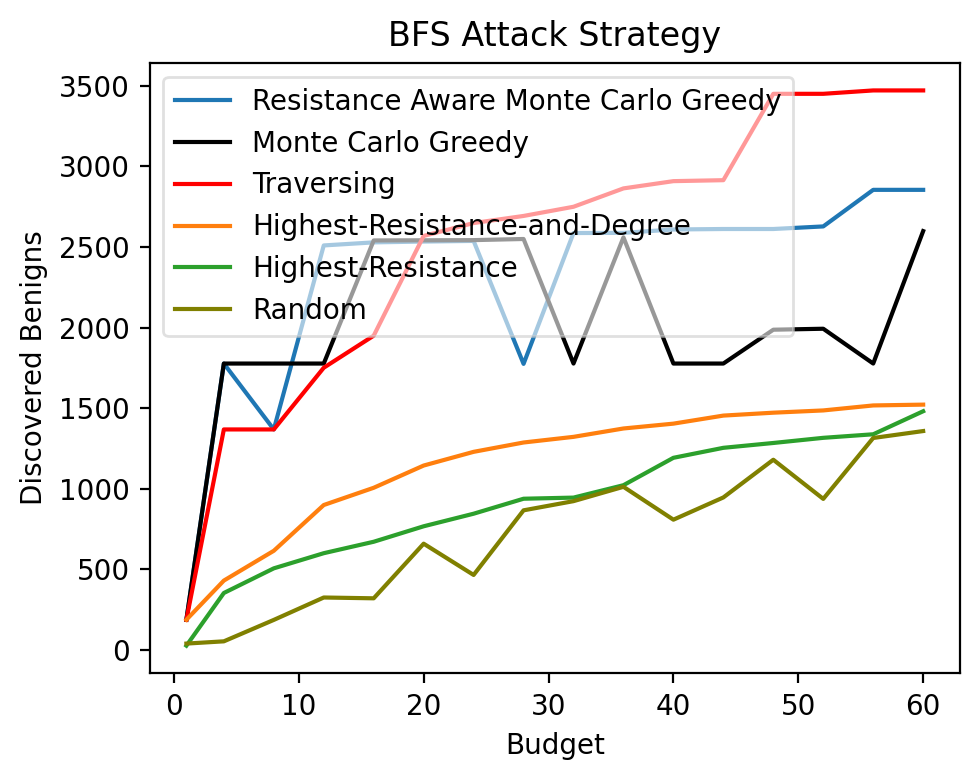}
    \end{subfigure}
    
    \caption{The number of discovered benigns by each algorithm when the budget ranges from $1$ to $k$ in the maximizing benigns problem on the \textit{Facebook} dataset and for different attack strategies. 
    }
    \Description{The figure shows the number of benigns discovered by various algorithms as the budget increases from $1$ to $k$ in the maximizing benigns problem on the Facebook dataset. The x-axis represents the budget size, while the y-axis represents the number of discovered benigns. Different plots are plotted for each algorithm under various attack strategies, indicating the effectiveness of each method as the budget changes.}
    \label{fig:mb_facebook}
\end{figure*}

\subsubsection{Potential Attack Edges}

We now evaluate the performance of our Proposed Algorithm~\ref{pseudocode:RPAE} for the potential attack edges (PAE) problem against the Random algorithm, which makes $k$ random choices. Figure~\ref{fig:PAE experiment for Facebook dataset} (top row) illustrates the number of PAEs found by each algorithm for a range of budgets. Our algorithm outperforms the Random algorithm. This is unsurprising since our algorithm is optimal, as we proved in Section~\ref{sec:pae}.

One natural question is what fraction of PAEs are, in fact, attack edges (edges from sybil to benign). We report this information in Figure~\ref{fig:PAE experiment for Facebook dataset} (second row). As one can observe, for our algorithm, around $20\%$ of found PAEs are attack edges for various budget choices. To get a better understanding of how good this performance is, we consider the Full-Knowledge algorithm. We suppose this algorithm knows all the values of resistance (note that this information is not available to our algorithm) and aims to greedily pick nodes which maximize the ratio of attack edges over PAEs. While for small budgets, the gap is large, as the budget grows, the ratio of attack edges over PAEs for our algorithm almost matches the Full-Knowledge algorithm, which is impressive considering that our algorithm doesn't know the exact values of resistance $r(\cdot)$.
For other datasets, similar results are provided in the Appendix~\ref{appendix:Additional PAE experiment reports}.

\subsection{Classification Algorithms}
\label{sec:classification}

Our algorithms for discovering benigns and potential attack edges can act as a preprocessing step for various sybil detection algorithms. We analyze the performance of three state-of-the-art detection algorithms, with and without such preprocessing step.

We consider the SybilWalk~\cite{jia2017random} and SybilSCAR~\cite{wang2017sybilscar} algorithms. We also analyze a node classifier algorithm based on logistic regression that we call SybilMetric~\cite{asghari2022using}. We first executed each of these algorithms without any preprocessing. We then applied the Traversing algorithm to maximize the number of benigns (our best algorithm for maximizing benigns based on our experiments in the previous section) before running the detection strategies. In the final setup, we also used our potential attack edges discovering algorithm on both the initial known benigns and the newly identified benigns from the first preprocessing step. Since detection algorithms usually rely on homophily property, removing (or reducing the weight of) attack edges could benefit them. Therefore, in the last setup, before running the detection algorithms, the weights for the discovered PAEs were adjusted to reduce their importance.

Furthermore, in all experiments, for both preprocessing phases, the budget is set to $1\%$ of benigns. We used the same setup as the prior work for the studied detection algorithms.

Our comparison results for the Facebook dataset are shown in Table~\ref{table:combined_facebook_results}. To measure the performance of the detection algorithms, we use AUC (Area Under the Curve), which is commonly used by the prior work since it is less affected by biases compared to accuracy. Based on these results, the performance of all three algorithms is improved after adding maximizing benigns outputs. Using discovered PAEs as a preprocessing step also enhances the AUC in some cases, but not always. This flags that one may need to incorporate the information regarding the PAEs directly into these algorithms instead of just reducing weights as we did to get a stronger boost in the performance. (This would be an interesting potential avenue for future research.) 
Thus, overall, our preprocessing steps, especially discovering benigns, can significantly boost the performance of detection strategies. For other datasets, similar results are provided in the Appendix~\ref{appendix:Other Experiments Classification}.

\begin{figure}[htbp]
    \centering
    \includegraphics[width=\linewidth]{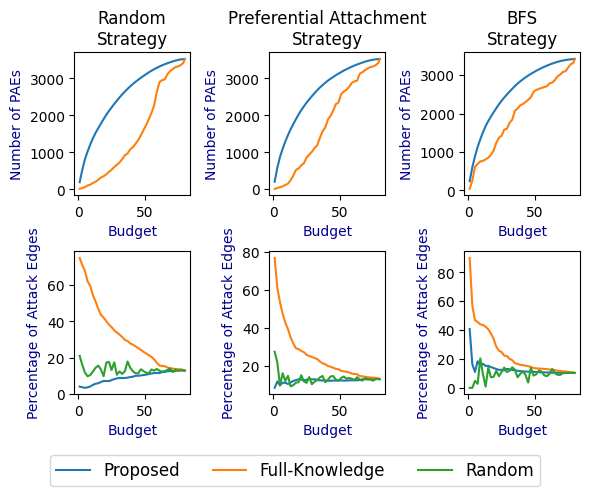}
    \caption{The number of discovered PAEs for different budgets (first row) and the percentage of attack edges relative to the number of discovered PAEs (second row) on the \textit{Facebook} dataset. Each column represents a different attack strategy.
    }
    \Description{The figure consists of two rows and three columns of plots. Each column corresponds to a distinct attack strategy. The first row displays the number of discovered PAEs (Protected Account Entities) for varying budget sizes, with the x-axis representing the budget and the y-axis representing the count of PAEs. The second row shows the percentage of attack edges relative to the number of discovered PAEs. The x-axis in this row represents different budget sizes, and the y-axis indicates the percentage of attack edges.}
    \label{fig:PAE experiment for Facebook dataset}
\end{figure}

It is also worth emphasizing that our preprocessing steps are not computationally demanding. More precisely, in most cases, the computational overhead added by the preprocessing step is negligible in comparison to the time required by the detection algorithm itself.

\section{Conclusion}

In this paper, we introduced novel attack strategies for synthesizing more realistic and generic datasets by introducing the concept of user resistance. We then considered two optimization problems where we leveraged resistance information to discover benigns and potential attack edges. We introduced several algorithms and theoretically analyzed their runtime and solution accuracy. We then showed the outcomes of these algorithms contain valuable information that can be used as a preprocessing step for detection algorithms. We conducted a large set of experiments that confirmed the positive impact of our preprocessing algorithms.

Future research could explore dynamic attack strategies to enhance the adaptability of sybil detection algorithms. Developing bias-robust algorithms is also essential, as current methods are vulnerable to dataset biases, especially when preprocessing reveals new benigns and skews the dataset.
In addition, developing algorithms for maximizing benigns problem with theoretical guarantees could be a potential avenue for future work.

\begin{table}[htbp]
    \centering
    \caption{Performance of SybilSCAR, SybilWalk, and SybilMetric with and without preprocessing on the \textit{Facebook} dataset. \textbf{Init} represents the setup without preprocessing. \textbf{MB} incorporates discovered benigns by the Traversing algorithm. \textbf{MB+PAE} incorporates both discovered benigns and PAEs.}
    \resizebox{\linewidth}{!}{
    \begin{tabular}{|c|c|c|c|c|}
        \hline
        \makecell{\textbf{Attack} \\ \textbf{Strategy}} & \textbf{Step} & \makecell{\textbf{SybilSCAR} \\ \textbf{AUC}} & \makecell{\textbf{SybilWalk} \\ \textbf{AUC}} & \makecell{\textbf{SybilMetric} \\ \textbf{AUC}} \\
        \midrule
        
        \multirow{3}{*}{Random} & Init & 0.924 & 0.966 & 1.00 \\
        \arrayrulecolor{gray}\cline{2-5}\arrayrulecolor{black}
        & MB & 0.988 & 0.998 & 1.00 \\
        \arrayrulecolor{gray}\cline{2-5}\arrayrulecolor{black}
        & MB+PAE & 0.988 & 0.998 & 0.99 \\
        \cline{1-5}
        \multirow{3}{*}{BA} & Init & 0.876 & 0.929 & 1.00 \\
        \arrayrulecolor{gray}\cline{2-5}\arrayrulecolor{black}
        & MB & 0.954 & 0.972 & 1.00 \\
        \arrayrulecolor{gray}\cline{2-5}\arrayrulecolor{black}
        & MB+PAE & 0.944 & 0.972 & 1.00 \\
        \cline{1-5}
        \multirow{3}{*}{BFS} & Init & 0.986 & 0.985 & 0.97 \\
        \arrayrulecolor{gray}\cline{2-5}\arrayrulecolor{black}
        & MB & 0.995 & 0.996 & 0.99 \\
        \arrayrulecolor{gray}\cline{2-5}\arrayrulecolor{black}
        & MB+PAE & 0.991 & 0.997 & 1.00 \\
        \hline
    \end{tabular}
    }
    \label{table:combined_facebook_results}
\end{table}





\balance
\bibliographystyle{ACM-Reference-Format} 



\appendix

\counterwithin{figure}{section}
\counterwithin{table}{section}
\counterwithin{algorithm}{section}

\section{Pseudocode of Attack Strategies}
\label{appendix:Algorithms of Attack Strategies}

The pseudocode for the Random, Preferential Attachment, and BFS attack strategies are provided in Algorithms~\ref{alg:Firststrategy},~\ref{alg:Secondstrategy}, and~\ref{alg:Thirdstrategy}, respectively.


\captionsetup[algorithm]{labelformat=empty} 
\renewcommand{\thealgorithm}{\thesection.\arabic{algorithm}}  

\begin{algorithm}[H]
\caption{\textbf{Attack Strategy \thealgorithm} Random}
\raggedright  \textbf{Input} $G=(V,E),S,B,AE(\cdot),r(\cdot),c$
\\
\raggedright  \textbf{Output} Generated graph $G'$
\begin{algorithmic}[1]
\For{each node $s_i$ in set $S$}
    \State $chosen\_nodes \gets \emptyset$
    \For{$i \leftarrow 1$ to $c \cdot AE(i)$}
        \State choose a node $u$ from $(B \setminus chosen\_nodes)$ randomly
        \State $chosen\_nodes \gets chosen\_nodes \cup \{u\}$
        
        \If{ $r(u) = 0$ }
            
            \State $E \gets E \cup \{(s_i, u)\}$
            \State with probability $\frac{1}{2}$: $E \gets E \cup \{(u,s_i)\}$
            
        \EndIf
    \EndFor
\EndFor

\State \Return $G'=(V,E)$

\end{algorithmic}
\label{alg:Firststrategy}
\end{algorithm}

\captionsetup[algorithm]{labelformat=empty} 
\renewcommand{\thealgorithm}{\thesection.\arabic{algorithm}}  

\begin{algorithm}[H]
\caption{\textbf{Attack Strategy \thealgorithm} Preferential Attachment (PreAt)}
\label{attackstrategyba}
\raggedright  \textbf{Input} $G=(V,E),S,B,AE(\cdot),r(\cdot),c$
\\
\raggedright  \textbf{Output} Generated graph $G'$
\begin{algorithmic}[1]
\Function{modified\_BA}{$G=(V,E),B,S$}
    \For{each node $u$ in $B$}
        \State $P_1(u) \gets \text{BA probability of in-degree}~u \text{ from } B$
        \State $P_2(u) \gets$ BA probability of in-degree $u$ from $S$
        \State $P(u) \gets \frac{P_1(u) + P_2(u)}{2}$
    \EndFor
    \State \Return $P$
\EndFunction
\Statex
\For{each node $s_i$ in set $S$}
    \State $chosen\_nodes \gets \emptyset$

    \State $P \gets \Call{modified\_BA}{}$
    \For{$i \leftarrow 1$ to $c \cdot AE(i)$}
        
        \State $u \gets$ a random node from $(B \setminus chosen\_nodes)$ following the probability vector $P$
        \State $chosen\_nodes \gets chosen\_nodes \cup \{u\}$

        \If{ $r(u) = 0$ }
            
            \State $E \gets E \cup \{(s_i, u)\}$
            \State with probability $\frac{1}{2}$: $E \gets E \cup \{(u,s_i)\}$
            
        \EndIf        
    \EndFor
\EndFor

\State \Return $G'=(V,E)$

\end{algorithmic}
\label{alg:Secondstrategy}
\end{algorithm}

\captionsetup[algorithm]{labelformat=empty} 
\renewcommand{\thealgorithm}{\thesection.\arabic{algorithm}}  

\begin{algorithm}[H]
\caption{\textbf{Attack Strategy \thealgorithm} BFS}
\raggedright  \textbf{Input} $G=(V,E),S,B,dual(\cdot),r(\cdot),c$
\\
\raggedright  \textbf{Output} Generated graph $G'$ 
\begin{algorithmic}[1]
\For{each node $s_i$ in set $S$}
    \State $chosen\_nodes \gets \emptyset$
    \State $BFS\_streamer \gets$ BFS starts from node $dual(s_i)$ 
    \State $P \gets \Call{modified\_BA}{G=(V,E),B,S}$ from Algorithm~\ref{attackstrategyba}

    \For{$i \leftarrow 1$ to $c \cdot AE(i)$}
    
        \State \textbf{if} any node is available in $BFS\_streamer$:
            \State \quad $u \gets$ next node from $BFS\_streamer$
        \State \textbf{else}:
            \State \quad $u \gets$ a random node from $(B \setminus chosen\_nodes)$ following the probability vector $P$ 

        \State $chosen\_nodes \gets chosen\_nodes \cup \{u\}$
        \If{ $r(u) = 0$ }
            \State $E \gets E \cup \{(s_i, u)\}$
            \State with probability $\frac{1}{2}$: $E \gets E \cup \{(u,s_i)\}$
        \EndIf
    \EndFor
\EndFor

\State \Return $G'=(V,E)$

\end{algorithmic}
\label{alg:Thirdstrategy}
\end{algorithm}
\captionsetup[algorithm]{labelformat=default} 

\section{Example for Transformer}
\label{appendix:Connection Between Optimal Solutions example}

See Figure~\ref{fig:MCtoMB} for an example of the transformer used in the proof of hardness for the maximizing benigns problem.

The transformer converts $I = \langle W,Q,k \rangle$, which is an instance of MC, into $I'=\langle G,B,S,k,p_r(\cdot) \rangle$, which is an instance of MB.
In this example 
$V_W = \{w_1, w_2, w_3, w_4, w_5\}$, 
$V_Q=\{q_1,q_2,q_3,q_4\}$, and $q_1=\{w_1,w_3\}$, $q_2=\{w_5\}$, $q_3=\{w_2, w_5\}$, $q_4=\{w_3,w_4\}$. In transformation, we have $B=V_Q$ and $S=\emptyset$. We keep the value of $k$ and set $p_r(v) = 1 $ for all nodes based on transformer rules.

\begin{figure}[H]
    \centering
    \includegraphics[width=0.85\linewidth]{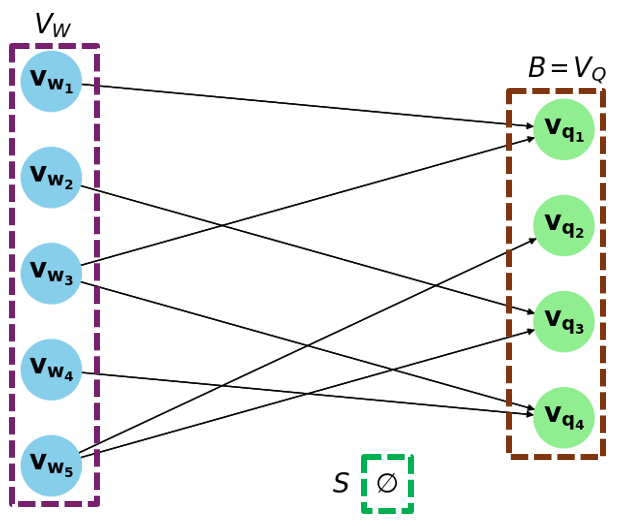}
    \caption{The graph outcome of transformer where $S=\emptyset$ and $V_Q=B$.}
    \Description{The figure illustrates a graph transformation process. On the left, two vertices are shown: $V_Q$ and $V_W$. Each $v_{q_i}$ in $V_Q$ is connected to one or more $v_{w_j}$ in $V_W$. In the figure, one class of nodes is assigned to $B$.}
    \label{fig:MCtoMB}
\end{figure}

\section{Simple Greedy Algorithm}
\label{appendix:mbgreedy}
Algorithm~\ref{pseudocode:greedyMB3242} provides the description of the simple greedy algorithm for the maximizing benigns problem. In this algorithm, we write $f(A \cup \{u\})$ for simplicity, but, to be precise, it should be expressed as $f(A \cup \{u\}, B, S, p_r(\cdot))$, but the other parameters are clear from the context.

\counterwithin{algorithm}{section}

\begin{algorithm}
\caption{Simple Greedy Algorithm}
\label{pseudocode:greedyMB3242}
\raggedright \textbf{Input} $G=(V,E),B, S, \text{budget }k,p_r(\cdot)$\\
\raggedright \textbf{Output} $reveal\_set\ A$

\begin{algorithmic}[1]
\State $A \gets \emptyset$  
\For{$k$ iterations}   
    \State $u \gets \arg\max_{u \in V \setminus A} f(A \cup \{u\})$  
    \State $A \gets A \cup \{u\}$  
\EndFor
\State \Return $A$

\end{algorithmic}
\end{algorithm}

\section{Proof for Non-submodularity}
\label{Appendix:Proof for Nonsubmodularity}

To prove that the function $f(\cdot)$ is not submodular, it suffices to provide an example such that
$v \in V,
A_1 \subseteq A_2$ and
\begin{equation*}
    f(A_2\cup \{v\}) - f(A_2)
    >
    f(A_1\cup \{v\}) - f(A_1)  
\end{equation*}

Consider the graph $G$ given in Figure~\ref{fig:notsubmodular}, let's define
$p_r(u_i)=\frac{1}{2}$ for all nodes $u_i$,
$S = \emptyset$
and
$B = \{u_1\}$. We define:
$A_1 = \{u_1\}$,
$A_2 = \{u_1,u_2\}$ and
$v = u_3$.

Let us compute each of the four entities one by one.
We have $f(A_1) = \frac{1}{2}$ because there is a 1/2 probability that the node $u_1$ is revealed to be resistant, which results in discovering that the node $u_2$ is benign. In addition, 
$f(A_1 \cup \{v\}) = f(A_1 \cup \{u_3\}) = \frac{1}{2}$ because revealing node $u_3$ cannot result in the discovery of any new benign. This is true because of Observation~\ref{obs:pathtoB_toDescovered}.
Thus, $f(A_1 \cup \{v\})-f(A_1)=0$.

Furthermore, $f(A_2) = \frac{3}{4}$ because there is a 1/2 probability of discovering \( u_2 \) to be benign by revealing \( u_1 \) to be resistant, and then a \( 1/2 \times 1/2 = 1/4 \) probability of discovering \( u_3 \) if both \( u_1 \) and \( u_2 \) turn out to be resistant.
We also have $f(A_2 \cup \{v\}) = \frac{3}{4} + \frac{n-3}{8}$,
because we have the same \(\frac{3}{4}\) as before, and with a probability of \(\frac{1}{8}\), \( n-3 \) new benigns will be discovered. Thus, we have $f(A_2 \cup \{v\})-f(A_2)=\frac{n-3}{8}$. Since $\frac{n-3}{8}>0$, for any $n\ge 4$, we can conclude the proof. 

\begin{figure}[H]
    \centering
    \includegraphics[width=0.85\linewidth]{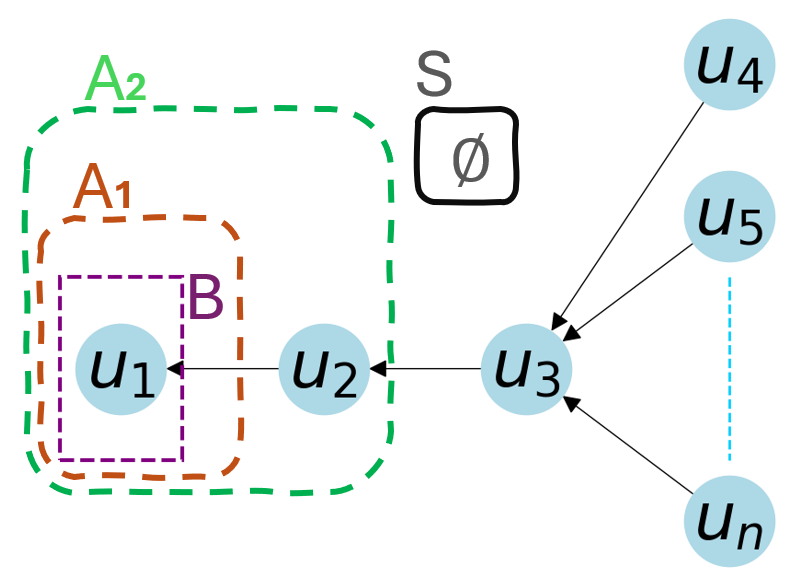}
    \caption{An example to show $f(\cdot)$ is not submodular.}
    \Description{Six nodes connected by directed edges and $u_1$ as a start point, and $u_2$ is its first level neighbor. $u_3$ is in the next level, and other nodes are connected to $u_3$. This figure is for checking the submodularity.}
    \label{fig:notsubmodular}
\end{figure}

\section{BFS-based Discovery Algorithm}
\label{appendix:BFS-based Discovery Algorithm}

Below, we provide the Algorithm~\ref{algorithm:BFS-basedDiscovery} for the BFS-based algorithm, which is used in line 11 of Algorithm~\ref{algorithm:montecarlo}.

\begin{algorithm}[H]
\caption{BFS-based Discovery Algorithm}
\label{algorithm:BFS-basedDiscovery}
\raggedright  \textbf{Input} $G=(V,E),\text{reveal set }A,B,S,r(\cdot)$
\\
\raggedright  \textbf{Output} $\text{size of discovered set}$
\begin{algorithmic}[1]
\State $\text{discovered} \gets \emptyset$
\State $\text{visited} \gets \emptyset$
\For{$v \in A \cap B$}
    \State $\text{bfs\_queue} \gets \text{empty queue}$
    \If{$r(v) = 1$}
        \State $\text{bfs\_queue.add}(v)$
    \EndIf
    \While{$\text{bfs\_queue is not empty}$}
        \State $w \gets \text{bfs\_queue.pop}()$
        \State $visited \gets visited \cup \{w\}$
        \State $\text{discovered} \gets \text{discovered} \cup (\Gamma_{\text{in}}(w) \setminus (B \cap S))$
        \For{$u \in (\Gamma_{\text{in}}(w) \cap A)\setminus visited$}
            \If{$r(u) = 1$}
                \State $\text{bfs\_queue.add}(u)$
            \EndIf
        \EndFor
    \EndWhile
\EndFor
\State \Return $|discovered|$
\end{algorithmic}
\end{algorithm}

\section{\texorpdfstring{Bound on $R$ in Monte Carlo Greedy Algorithm}{Bound on R in Monte Carlo Greedy Algorithm}}
\label{appendix:hoeffding Proof of bound}

\begin{theorem}[Hoeffding's inequality~\cite{hoeffding1994probability}]
\label{theorem:hoeffding}
Consider independent random variables $X_1, X_2, \ldots, X_R$ such that $X_i \in [a,b]$, $\bar{X}$ is the mean of $R$ random variables and $\mu$ is the expected value of $\bar{X}$. 
Then, $\mathbb{P}(\bar{X} - \mu \ge \epsilon) \le \exp(\frac{-2R\epsilon^2}{(b-a)^2})$, where $\epsilon$ is the error margin.
\end{theorem}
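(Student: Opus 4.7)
The plan is to use the Chernoff--Cram\'er technique: bound the deviation probability via the moment generating function and then optimize over an auxiliary parameter. For any $t>0$, Markov's inequality applied to the monotone transformation $x \mapsto e^{tx}$ gives
\[ \mathbb{P}(\bar{X} - \mu \ge \epsilon) = \mathbb{P}\bigl(e^{tR(\bar{X} - \mu)} \ge e^{tR\epsilon}\bigr) \le e^{-tR\epsilon}\, \mathbb{E}\bigl[e^{tR(\bar{X}-\mu)}\bigr]. \]
Since $R(\bar{X}-\mu) = \sum_{i=1}^R (X_i - \mu_i)$, where $\mu_i := \mathbb{E}[X_i]$, and the $X_i$ are independent, the right-hand expectation factorizes as $\prod_{i=1}^R \mathbb{E}[e^{t(X_i - \mu_i)}]$. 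This reduces the problem to bounding the moment generating function of each centered, bounded summand.

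The key ingredient I would establish next is Hoeffding's lemma: for any zero-mean random variable $Y$ supported on $[\alpha,\beta]$, one has $\mathbb{E}[e^{tY}] \le \exp\bigl(t^2(\beta-\alpha)^2/8\bigr)$. The standard argument uses convexity of $y \mapsto e^{ty}$ to bound $e^{tY} \le \tfrac{\beta-Y}{\beta-\alpha}e^{t\alpha} + \tfrac{Y-\alpha}{\beta-\alpha}e^{t\beta}$; taking expectations and using $\mathbb{E}[Y]=0$ expresses $\psi(t) := \log \mathbb{E}[e^{tY}]$ as a smooth function whose second derivative, by a variance-style computation on a tilted distribution, can be shown to satisfy $\psi''(t) \le (\beta-\alpha)^2/4$ uniformly in $t$. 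Since $\psi(0)=\psi'(0)=0$, integrating twice yields the lemma. Applying it to each $X_i - \mu_i \in [a-\mu_i,\, b-\mu_i]$, an interval of length $b-a$, gives $\mathbb{E}[e^{t(X_i - \mu_i)}] \le \exp\bigl(t^2(b-a)^2/8\bigr)$, and combining with the previous display produces
\[ \mathbb{P}(\bar{X} - \mu \ge \epsilon) \le \exp\!\Bigl(-tR\epsilon + \tfrac{Rt^2(b-a)^2}{8}\Bigr). \]

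It remains to choose $t>0$ so as to make this bound as tight as possible. The exponent is a quadratic in $t$ minimized at $t^{\star} = 4\epsilon/(b-a)^2$; substituting yields the exponent $-2R\epsilon^2/(b-a)^2$, matching the claim. I expect Hoeffding's lemma itself to be the main obstacle: the Markov step and the optimization over $t$ are essentially mechanical, whereas controlling the cumulant generating function of a centered bounded random variable requires the chord/convexity bound together with the two-derivative estimate on $\psi$, which is the most delicate part of the argument.
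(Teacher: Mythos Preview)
Your argument is the standard Chernoff--Cram\'er derivation of Hoeffding's inequality and is correct as sketched: Markov's inequality on the exponential, factorization by independence, Hoeffding's lemma for each centered bounded summand, and optimization in $t$. Each step is accurate, including the choice $t^\star = 4\epsilon/(b-a)^2$ and the resulting exponent.

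However, there is nothing to compare against: the paper does not prove this theorem. It is stated with a citation to Hoeffding's original work and then \emph{applied} (in the appendix) to derive the sample-size bound $R \ge \frac{(b-a)^2 \ln(1/(1-\alpha))}{2\epsilon^2}$ by rearranging $\exp\bigl(-2R\epsilon^2/(b-a)^2\bigr) \le 1-\alpha$. So your proposal supplies a proof where the paper simply invokes a classical result; if the intent was to match the paper, you could have just cited Hoeffding and moved directly to the bound on $R$.
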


To gain confidence $\alpha$ using the above lemma, one requires
$R \ge 
\frac{(b-a)^2 \ln (\frac{1}{1-\alpha})}{2\epsilon^2}$.
%
%
According to the inequality, we have
\[
\mathbb{P}(\bar{X} - \mu <  \epsilon) \ge 1 - \exp(\frac{-2R\epsilon^2}{(b-a)^2}).
\]
By setting $\mathbb{P}(\bar{X} - \mu <  \epsilon)  \ge \alpha$ we will have  
\[
 1 - \exp(\frac{-2R\epsilon^2}{(b-a)^2}) \ge \alpha, 
\]
Which means
\[
 \ln(1-\alpha) \ge  \frac{-2R\epsilon^2}{(b-a)^2}, 
\]
which yields 
\[
 R \ge \frac{(b-a)^2 \ln (\frac{1}{1-\alpha})}{2\epsilon^2}. 
\]

Suppose that there are $R$ Monte Carlo experiments, and the $i$-th experiment estimates the number of discovered newly benigns based on each $A$ as $X_i$. The above calculated bound on $R$ in this problem will be $R \ge \frac{k^2\Delta_{\text{in}}^2 \ln(\frac{1}{1-\alpha})}{2\epsilon^2}$
if we set, $a = 0$ and $b \le |A|\cdot\Delta_{\text{in}} \le k\Delta_{\text{in}}$. 

\section{Run Time of Monte Carlo Greedy Algorithm}
\label{appendix:complexity discussion}
The complexity of the Monte Carlo Greedy algorithm (described in Algorithm~\ref{algorithm:montecarlo}) is determined by the $k$ iterations to each time identify the best node among at most $n$ candidates. For each node, the computational cost is defined by the \texttt{EST} function. This function involves assigning $r(v)$ for each $v \in A$ based on $p_r(v)$, which incurs a cost of $O(|A|)$, which is in $O(n)$. Subsequently, it computes the number of discovered benigns using the algorithm described in Appendix~\ref{appendix:BFS-based Discovery Algorithm}, which has a complexity of $O(n+m)$. This algorithm runs a BFS with different roots, but as we have a common list of visited nodes in the algorithm, the complexity is still $O(n+m)$. Therefore, the overall complexity is given by $O(k(n + n \cdot R(n+m)))$, which simplifies to $O(k \cdot R \cdot (n^2 + nm))$.

\section{Example for Traversing Algorithm}
\label{appendix:TraversingResistanceDegreeAware}

Here, we present an example for Algorithm~\ref{algorithm:TraversingResistanceDegreeAware}, visualized in Figures~\ref{fig:Datastructureforalgorithm} and~\ref{fig:Graphschemaforalgorithm}, to illustrate its execution and the associated data structure. 

We have
$B = \{a,b\}$, $S=\emptyset$ and $N = \{a,b\}$. Initially, 
$\hat{\Gamma}_{\text{in}}(a) = \{d,f\}$,
$\hat{\Gamma}_{\text{in}}(e) = \{c,f\}$, 
$\hat{\Gamma}_{\text{in}}(c) = \{d\}$,
$\gamma_{\text{in}}(e) = 2$, and
$\gamma_{\text{in}}(c) = 1$.
In addition, 
$\Gamma_{\text{out}}(d) = \{c,a\}$, and
$\Gamma_{\text{out}}(f) = \{a,b,e\}$.
Suppose Algorithm~\ref{algorithm:TraversingResistanceDegreeAware} selects $a$ and $r(a)=1$. Then, in line $9$, we will have $N=\{a,b\}\setminus \{a\}=\{b\}$. From $a$'s in-neighbors table, we walk to its in-neighbors $d$ and $f$. From each of them, by purple links, move to their out-neighbors. For example, from the node $f$ on the left side, with the help of a purple link, we move to $f$ cell in the right table, and then, for its connected nodes, include $e$ and $b$ ($a$ is traversed before, and we do not care about that), we will use orange and blue links. For example, with the orange link, we go to $f$, which is connected to $e$ in the right table. We will remove that node and the connection there. Then, we can go to the related counter and decrease it once. Then we have to move backward and do the same for cell $d$ in the left table, and from its neighbor $c$, with a pink link, go to node $d$ on the right side and do the same. In the output of this step, we have:
$\hat{\Gamma}_{\text{in}}(e) = \{c\}$, 
$\hat{\Gamma}_{\text{in}}(c) = \{\}$,
$\gamma_{\text{in}}(e) = 1$, and
$\gamma_{\text{in}}(c) = 0$. Please note that we avoid drawing all links in the figure, as it would make it really unreadable.

\begin{figure*}[ht!]
    \centering
    \begin{minipage}{0.77\linewidth}
        \centering
        \includegraphics[width=\linewidth]{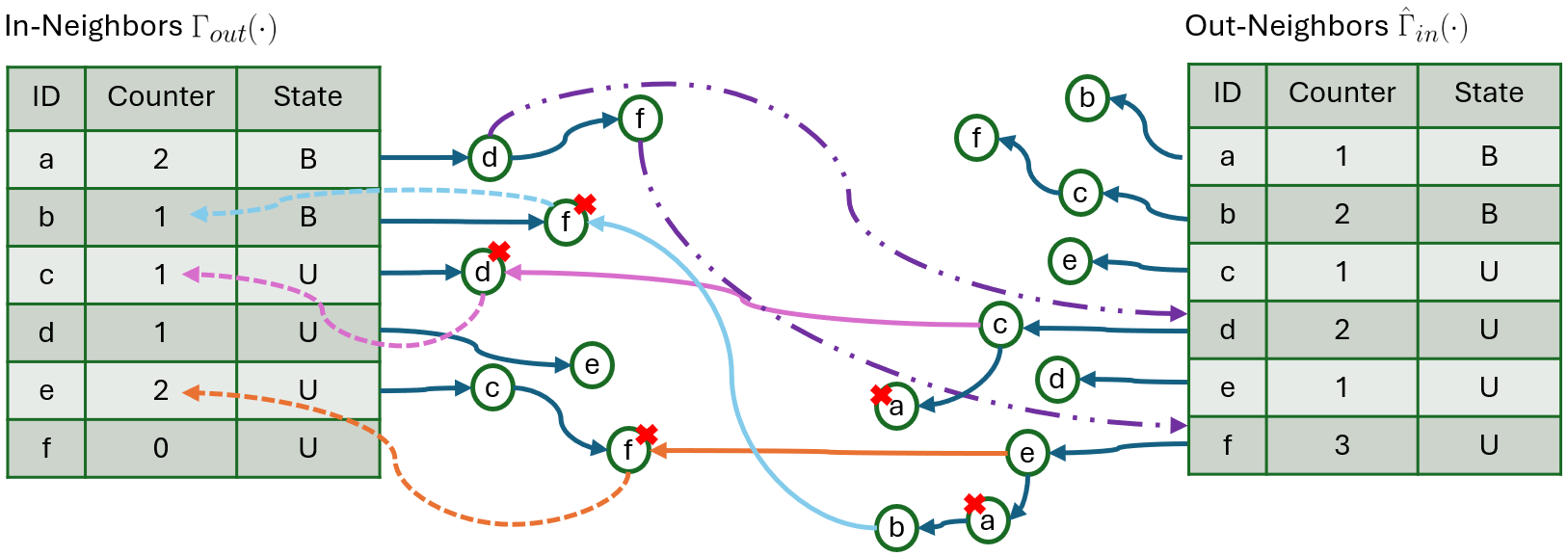}
        \caption{Data structure schema used to update neighborhoods 
        in Algorithm~\ref{algorithm:TraversingResistanceDegreeAware}. First, starting from $a$ with purple lines, we can reach $d$ and $f$ on the right side table, and then based on three lines shown by blue, orange, and pink lines, we can apply the updates.}
        \Description{The figure shows a schema illustrating how neighborhoods are updated in a graph. Starting from node $a$, it connects to nodes $d$ and $f$. Three updates are represented by blue, orange, and pink lines, indicating changes in node connections. Two tables display the in-neighbors (left) and out-neighbors (right) for each node, reflecting the step-by-step updates.}
        \label{fig:Datastructureforalgorithm}
    \end{minipage}%
    \hfill
    \begin{minipage}{0.22\linewidth}
        \centering
        \includegraphics[width=\linewidth]{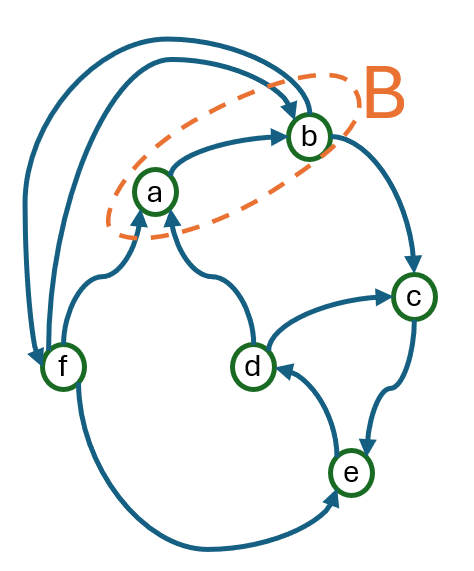}
        \caption{Example graph for describing the data structure used in Algorithm~\ref{algorithm:TraversingResistanceDegreeAware}. Nodes $a$ and $b$ form the benign set.}
        \Description{The figure includes nodes labeled $\{a, b, c, d, e, f\}$. An orange boundary highlights the benign region and consists of nodes $\{a, b\}$.}
        \label{fig:Graphschemaforalgorithm}
    \end{minipage}
\end{figure*}

\section{Real-World Datasets Statistics Reports}
\label{appendix:Datasets Real world Statistics reports}

Table~\ref{table:real_word_stat} contains some statistics about the real-world datasets we used.

\begin{table}[ht!]
\centering
\caption{Some statistics on real-world networks studied.}
\label{table:real_word_stat}
\begin{tabular}{|l|c|c|c|c|}
\hline
Network & Facebook & Twitter & LastFM & Pokec \\
\midrule
$|$Nodes$|$ & 4,039 & 10,000 & 7,624 & 10,000 \\ \arrayrulecolor{gray}\hline\arrayrulecolor{black}
$|$Undirected Edges$|$ & 88,234 & 0 & 27,806 & 0 \\ \arrayrulecolor{gray}\hline\arrayrulecolor{black}
$|$Directed Edges$|$ & 0 & 350,600 & 0 & 94,066 \\ \arrayrulecolor{gray}\hline\arrayrulecolor{black}
Avg $d_{\text{in}}$ & 43.69 & 35.06 & 7.29 & 9.4 \\ \arrayrulecolor{gray}\hline\arrayrulecolor{black}
Avg $d_{\text{out}}$ & 43.69 & 35.06 & 7.29 & 9.4 \\ \arrayrulecolor{gray}\hline\arrayrulecolor{black}
Is directed? & - & $\checkmark$ & - & $\checkmark$ \\
\hline
\end{tabular}
\end{table}

\section{Datasets Additional Statistics Reports after Attacks}
\label{appendix:Additional Datasets after Attacks Statistics reports}

For our studied datasets, the statistics after Random, PreAt (Preferential Attachment), and BFS attacks are summarized in Table~\ref{table:CombinedStatistics}. The number of attack edges and the number of edges that are reverse of attacks are shown by $|$Attacks$|$ and $|$Rev. Attacks$|$.

\begin{table*}[ht!]
\centering
\caption{Statistics for \textit{Facebook}, \textit{Twitter}, \textit{LastFM}, and \textit{Pokec} datasets after Random, PreAt (Preferential Attachment), and BFS attacks. $s\rightarrow s$ and $b\rightarrow b$ indicate sybil-to-sybil and benign-to-benign edges. The next four rows represent the average number of incoming/outgoing edges for sybils and benigns. $r(\cdot)$ and $p_r(\cdot)$ represent resistance and probability of resistance.}
\label{table:CombinedStatistics}
\resizebox{\textwidth}{!}{
\begin{tabular}{||l||c|c|c||c|c|c||c|c|c||c|c|c||}
\hline
\multirow{2}{*}{Statistic} & \multicolumn{3}{c||}{Facebook+} & \multicolumn{3}{c||}{Twitter+} & \multicolumn{3}{c||}{LastFM+} & \multicolumn{3}{c||}{Pokec+} \\ \cline{2-13}
                           & Random & PreAt & BFS & Random & PreAt & BFS & Random & PreAt & BFS & Random & PreAt & BFS \\ \midrule
$|Edges|$                  & 203459 & 222080 & 202870 & 361687 & 393084 & 351979 & 66259 & 67546 & 65402 & 128875 & 142686 & 125539 \\ \hline
$|Attacks|$                & 13003 & 25396 & 12521 & 25073  & 45968  & 18619  & 3063  & 3919  & 2530  & 16870  & 26090  & 14683  \\ \hline
$|Rev.\ Attacks|$          & 6450   & 12678  & 6343   & 12585  & 23087  & 9331   & 1552  & 1983  & 1228  & 8494   & 13085  & 7345   \\ \hline
$|s\rightarrow s|$         & 7538   & 7538   & 7538   & 16484  & 16484  & 16484  & 6032  & 6032  & 6032  & 9446   & 9446   & 9446   \\ \hline
$|b\rightarrow b|$         & 176468 & 176468 & 176468 & 307545 & 307545 & 307545 & 55612 & 55612 & 55612 & 94065  & 94065  & 94065  \\ \hline
Avg $d_{\text{in}}$ Sybils    & 34.71  & 50.16  & 34.44  & 29.069 & 39.571 & 25.815 & 9.953 & 10.518 & 9.528  & 17.94  & 22.531 & 16.791 \\ \hline
Avg $d_{\text{out}}$ Sybils   & 50.97  & 81.72  & 49.77  & 41.557 & 62.452 & 35.103 & 11.936 & 13.059 & 11.236 & 26.316 & 35.536 & 24.129 \\ \hline
Avg $d_{\text{in}}$ Benigns   & 46.91  & 49.98  & 46.79  & 33.262 & 35.351 & 32.616 & 7.696 & 7.808  & 7.626  & 11.094 & 12.016 & 10.875 \\ \hline
Avg $d_{\text{out}}$ Benigns  & 45.29  & 46.83  & 45.26  & 32.013 & 33.063 & 31.688 & 7.498 & 7.554  & 7.455  & 10.256 & 10.715 & 10.141 \\ \hline
Avg $r(\cdot)$         & 0.7658 & 0.7658 & 0.7658 & 0.7519 & 0.7519 & 0.7519 & 0.7542 & 0.7542 & 0.7542 & 0.7464 & 0.7464 & 0.7464 \\ \hline
Avg $p_r(\cdot)$       & 0.6225 & 0.6309 & 0.6431 & 0.6216 & 0.6248 & 0.6222 & 0.6278 & 0.6310 & 0.6287 & 0.6214 & 0.6194 & 0.6268 \\ \hline
\end{tabular}
}
\end{table*}

\section{Resistance Aware Monte Carlo Greedy Algorithm}
\label{Appendix:montecarloresistanceaware}

Algorithm~\ref{algorithm:montecarloresistanceaware} describes the Resistance Aware Monte Carlo Greedy algorithm. This is the variant of the Monte Carlo Greedy algorithm, which reveals the resistance of each node once it is selected and leverages that information for the next choices.

\begin{algorithm}[H]
\caption{Resistance Aware Monte Carlo Greedy Algorithm}
\label{algorithm:montecarloresistanceaware}
\raggedright  \textbf{Input} $G=(V,E),B, S, \text{budget }k,p_r(\cdot),\epsilon, \alpha$
\\
\raggedright  \textbf{Output} $reveal\_set\ A$
\\
\begin{algorithmic}[1]
\State Initialize $A \gets \emptyset$
\State Initialize $A' \gets \emptyset$
\For{$i = 1$ to $k$}
    \State $v \gets \arg\max_{w \in V\setminus A} \Call{Est}{A' \cup \{w\}, p_r(\cdot),B,S,\epsilon, \alpha}$
    \State $A \gets A \cup \{v\}$
    \If {$r(v) = 1$} 
        \State $A' \gets A' \cup \{v\}$
    \EndIf
\EndFor
\State \Return $A$

\item[]
\Function{Est}{$A,p_r(\cdot),B,S,\epsilon,\alpha$}
    \State Initialize $count \gets 0$
    \State $R \gets \left\lceil  \frac{k^2\Delta_{\text{in}}^2 \ln(\frac{1}{1-\alpha})}{2\epsilon^2} \right\rceil$
    \For{$R$ iterations} 
        \State Assign resistance $r(v)$ for each $v \in A$ based on $p_r(v)$
        \State $new\_benign \gets$ number of discovered benigns based on $B$, $S$, and assigned resistances.
        \State $count \gets count + new\_benign$
    \EndFor
    \State \Return $count / R$  
\EndFunction

\end{algorithmic}
\end{algorithm}



        


\section{Additional Experiments for Maximizing Benigns}
\label{appendix:Additional MB experiment reports}

For the experiments of the maximizing benigns problem, the results on Twitter, LastFM, and Pokec datasets are provided in Figures~\ref{fig:mb_twitter}, \ref{fig:mb_lastfm}, and~\ref{fig:mb_pokec}.

\begin{figure*}[ht!]
    \centering
    
    \includegraphics[width=0.32\linewidth]{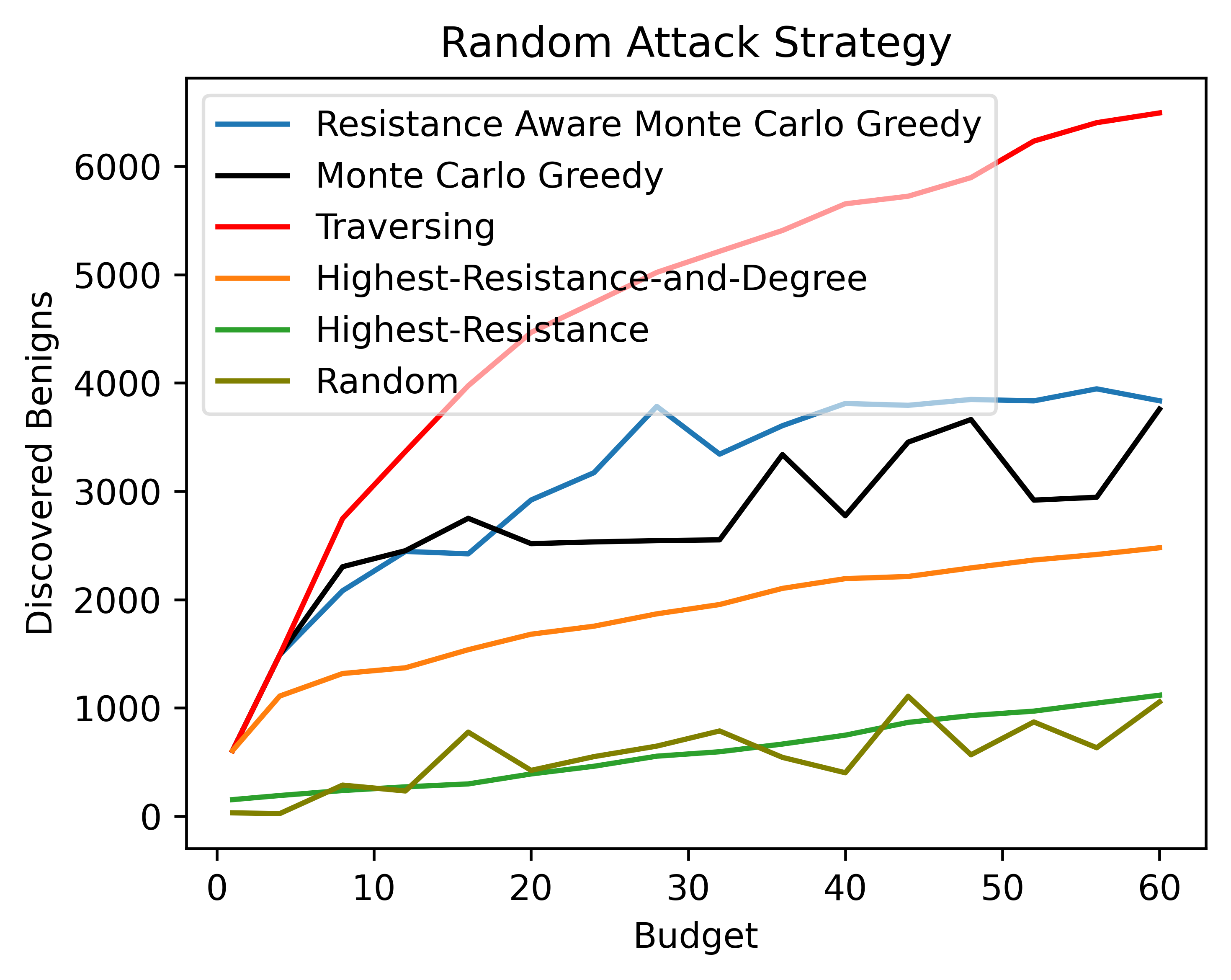}
    \includegraphics[width=0.32\linewidth]{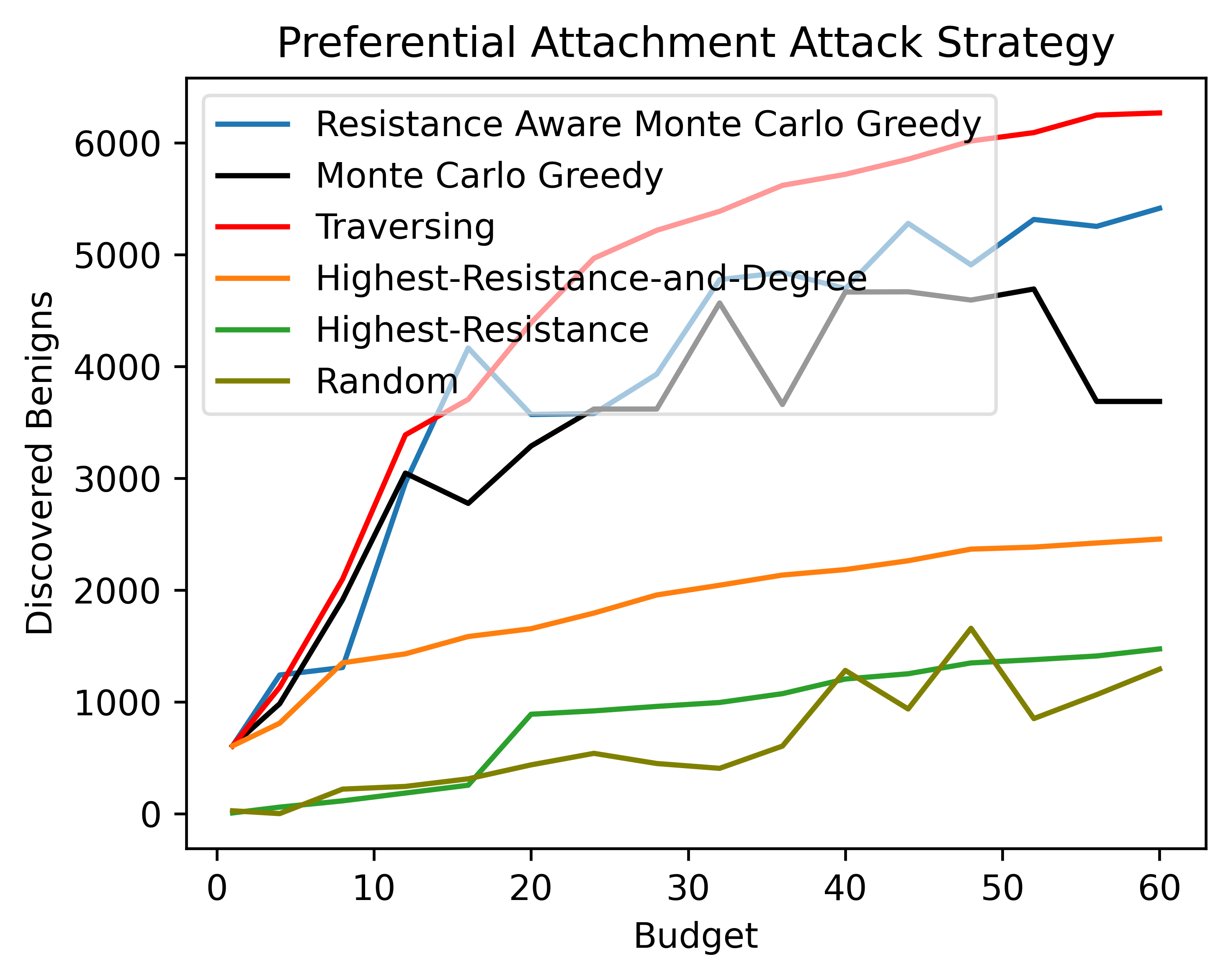}
    \includegraphics[width=0.32\linewidth]{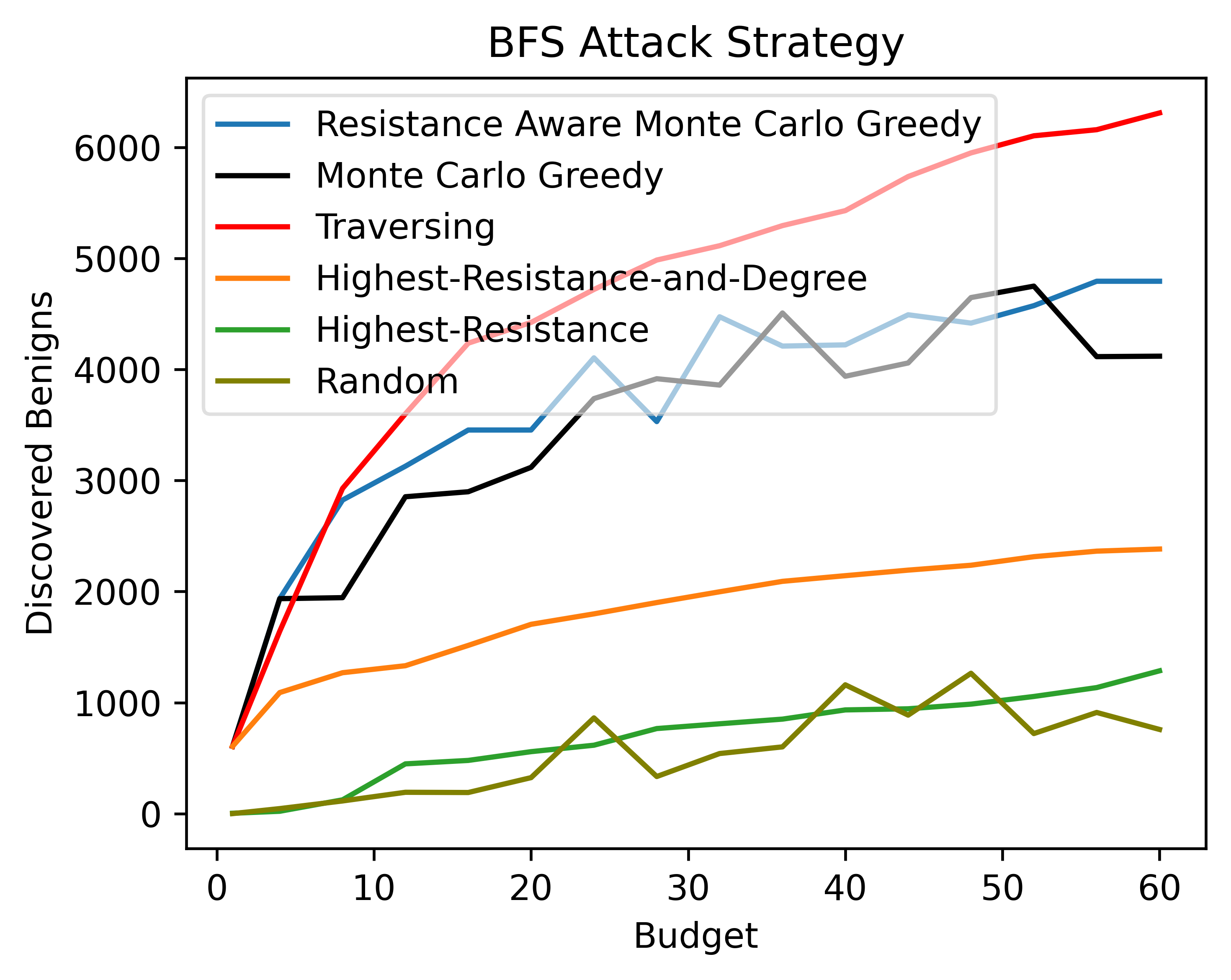}
    \caption{The number of discovered benigns by each algorithm when the budget ranges from $1$ to $k$ in the maximizing benigns problem on the \textit{Twitter} dataset and for different attack strategies.}
    \label{fig:mb_twitter}

    \includegraphics[width=0.32\linewidth]{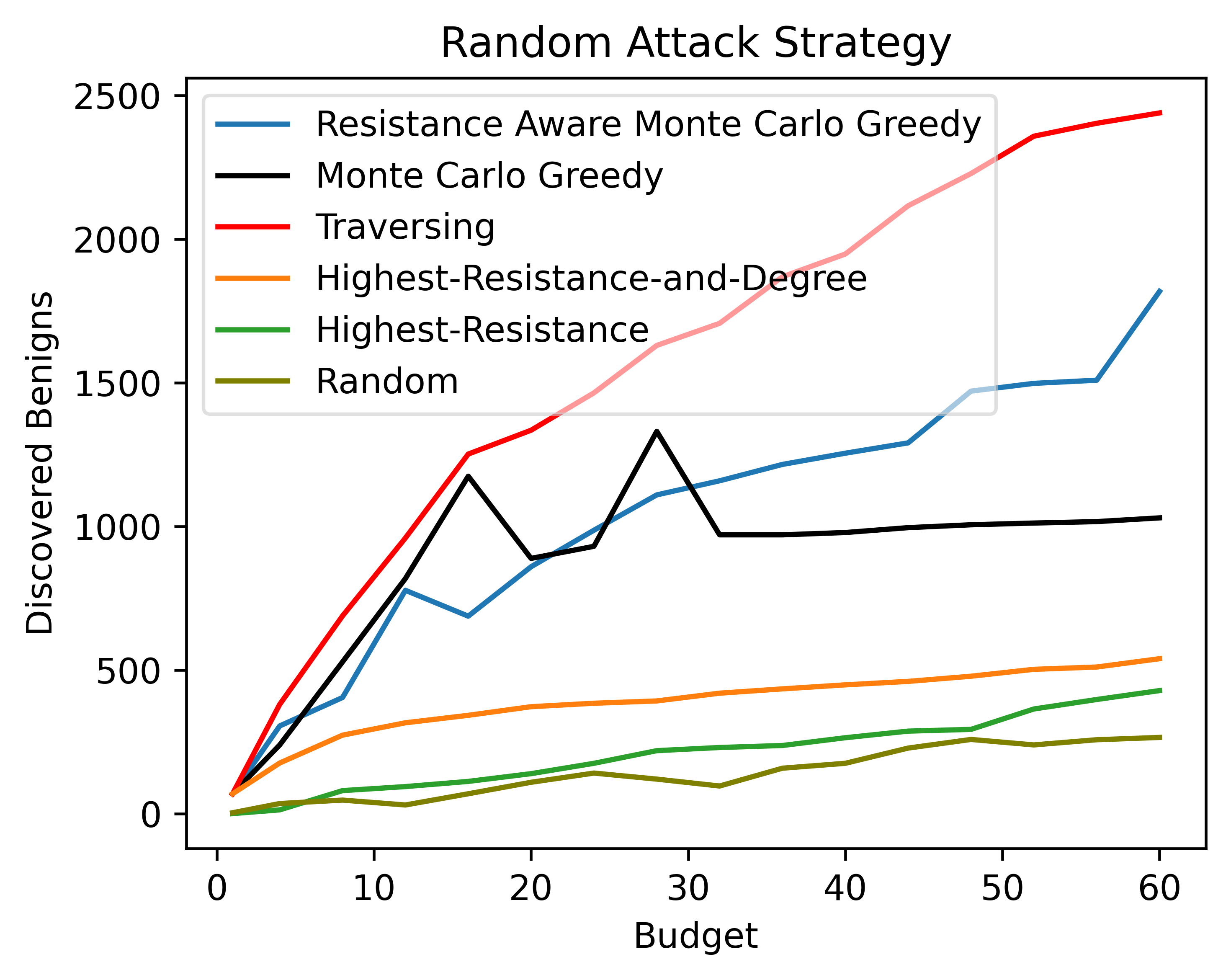}
    \includegraphics[width=0.32\linewidth]{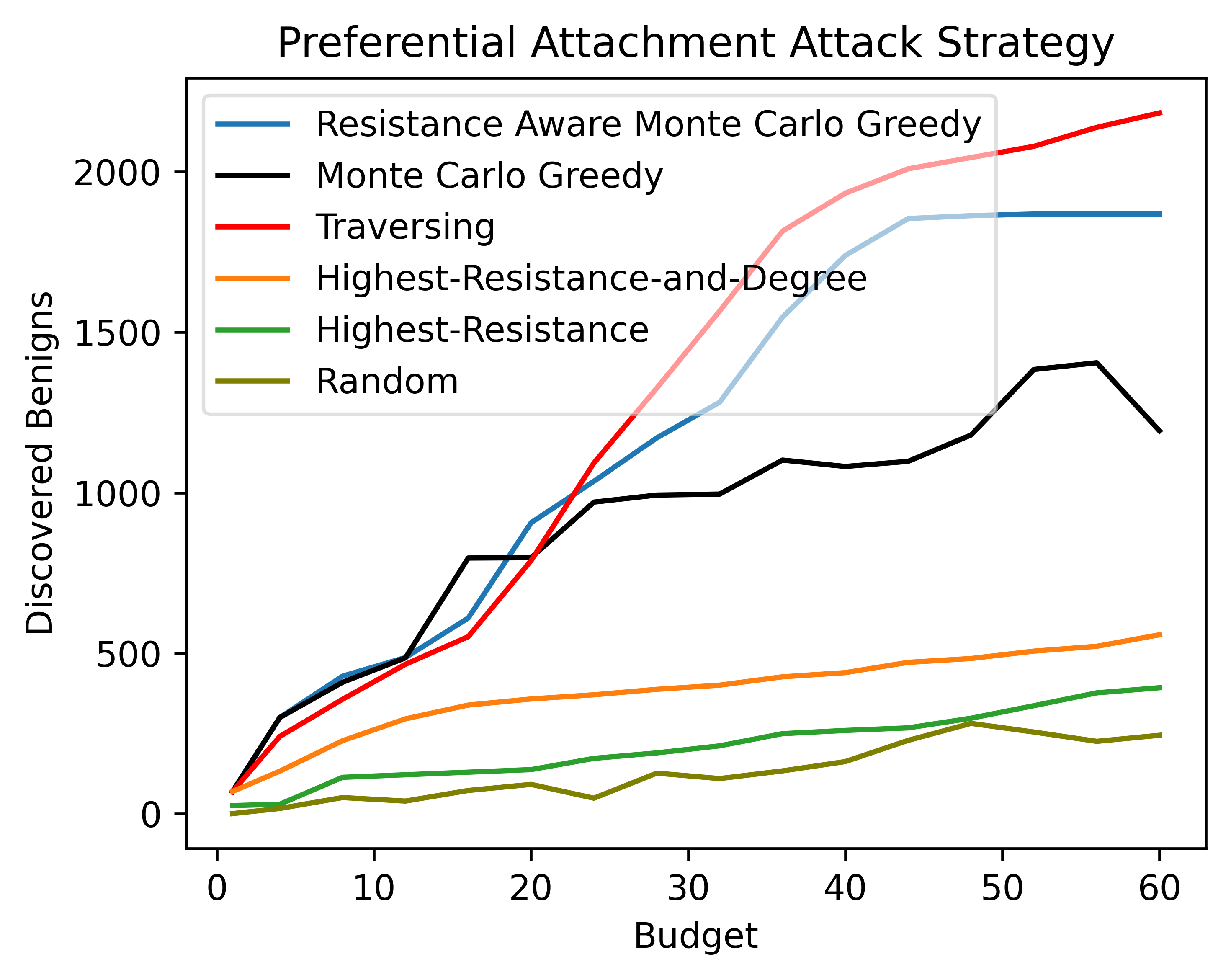}
    \includegraphics[width=0.32\linewidth]{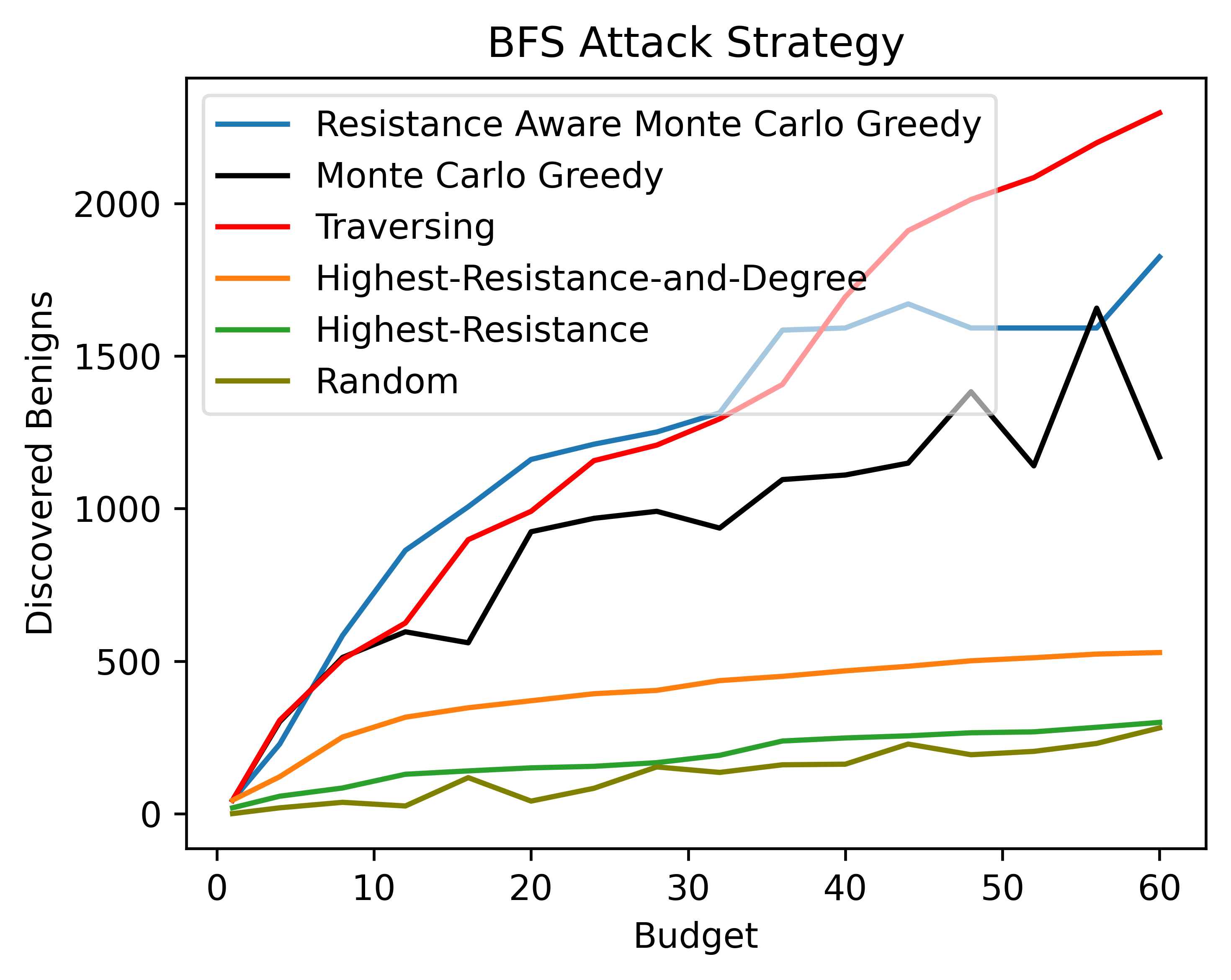}
    \caption{The number of discovered benigns by each algorithm when the budget ranges from $1$ to $k$ in the maximizing benigns problem on the \textit{LastFM} dataset and for different attack strategies. }
    \label{fig:mb_lastfm}

    \includegraphics[width=0.32\linewidth]{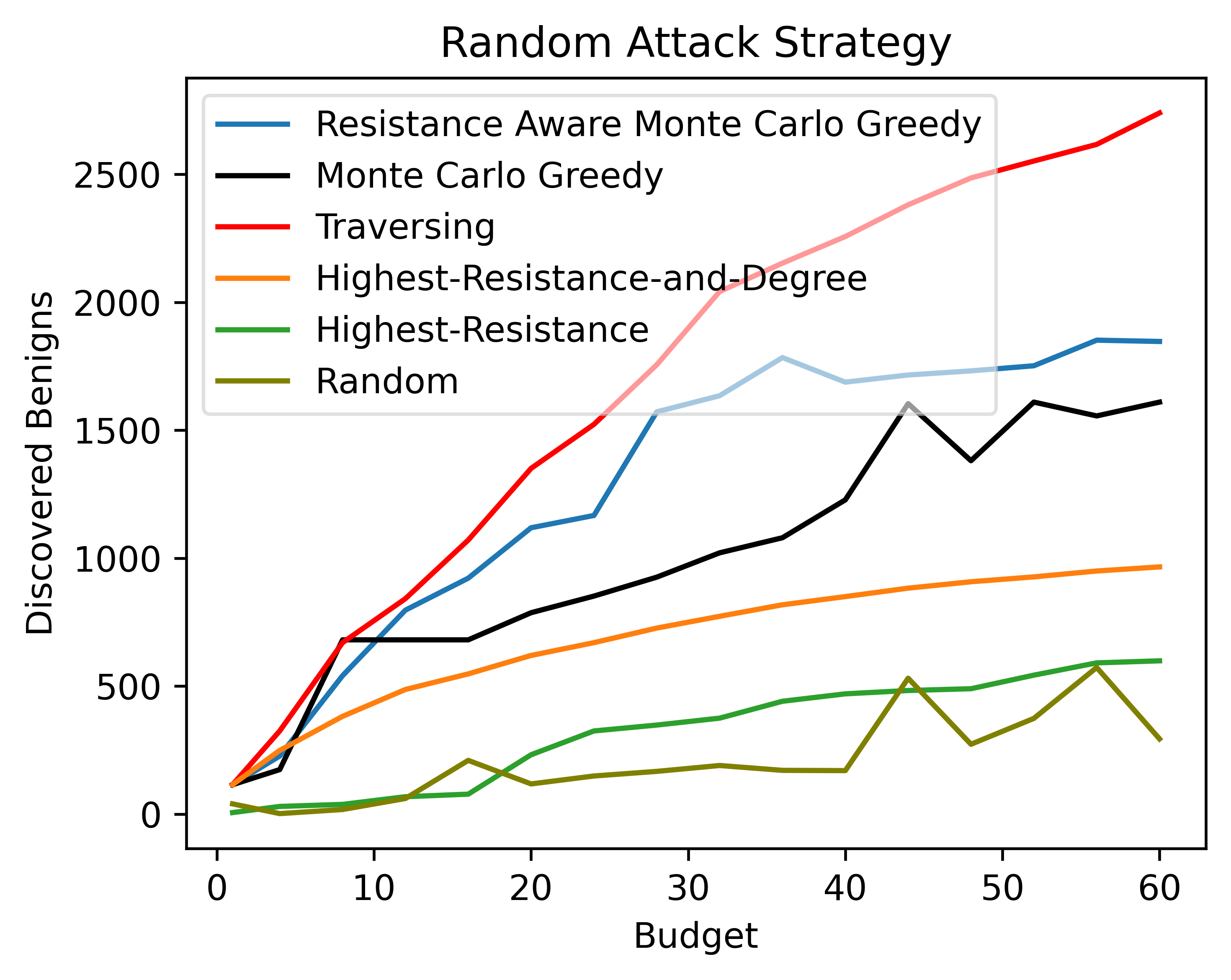}
    \includegraphics[width=0.32\linewidth]{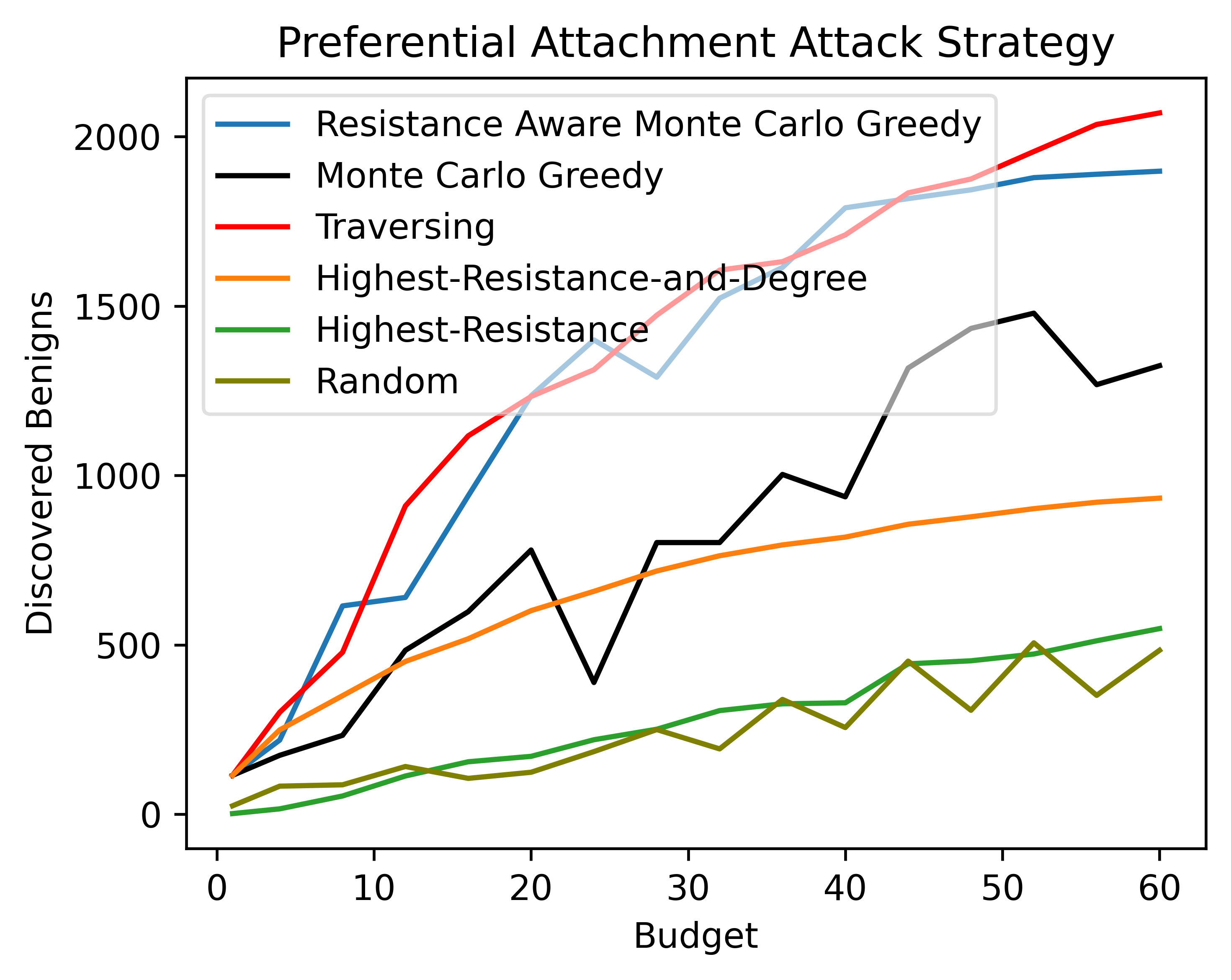}
    \includegraphics[width=0.32\linewidth]{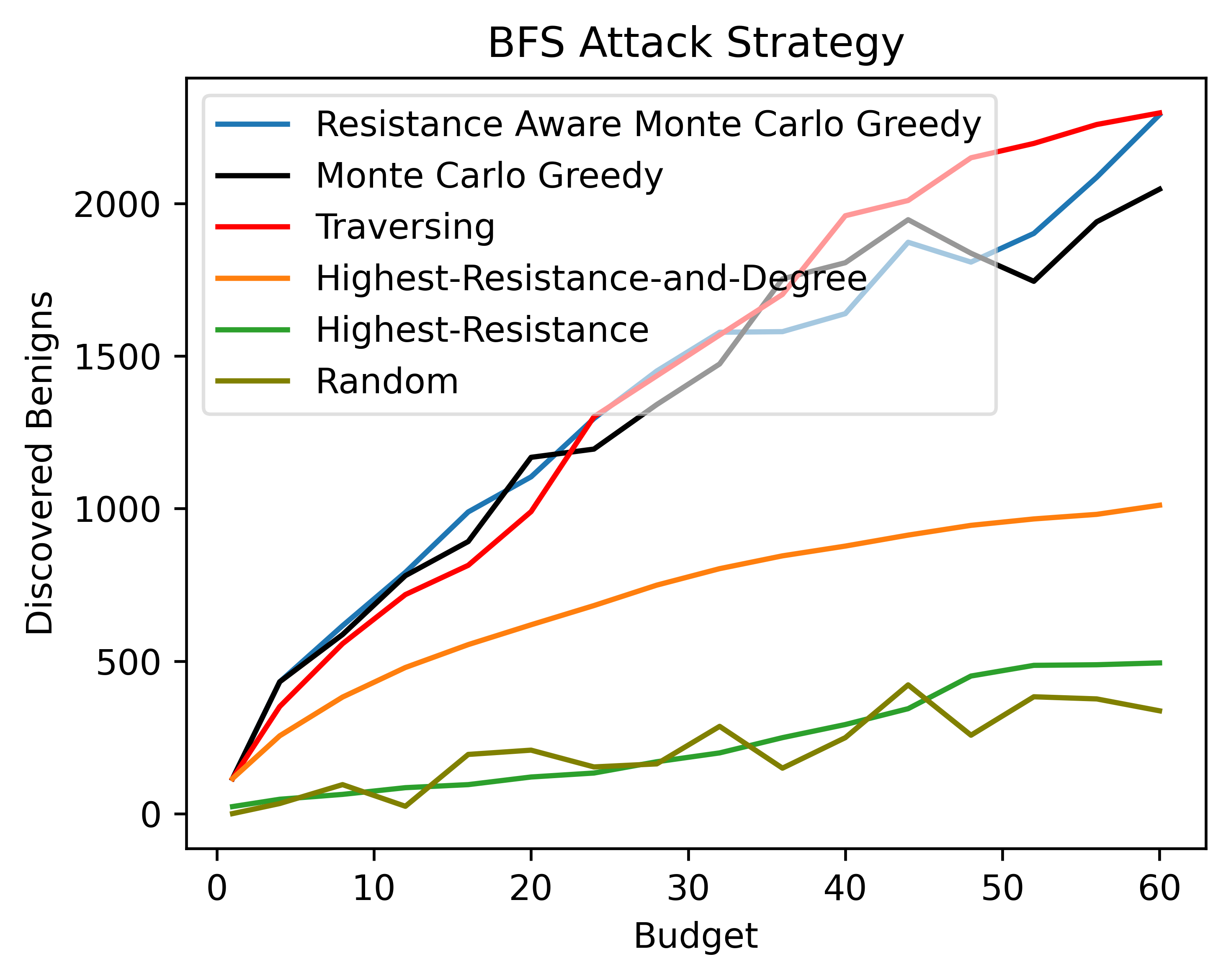}
    \caption{The number of discovered benigns by each algorithm when the budget ranges from $1$ to $k$ in the maximizing benigns problem on the \textit{Pokec} dataset and for different attack strategies. }
    \label{fig:mb_pokec}
    \Description{The figure illustrates the number of benigns discovered by different algorithms as the budget increases from $1$ to $k$ in the maximizing benigns problem across various datasets. The x-axis shows the budget size, and the y-axis shows the number of discovered benigns. Multiple plots represent different algorithms under various attack strategies, demonstrating the effectiveness of each method as the budget grows.}
\end{figure*}

\section{Additional Experiment for PAE}
\label{appendix:Additional PAE experiment reports}

For the experiments of discovering potential attack edges problem, the results on Twitter, LastFM, and Pokec datasets are provided in Figure~\ref{fig:PAE_all_datasets}.

\begin{figure*}[ht!]
    \centering
    \begin{subfigure}{0.49\textwidth}
        \centering
        \includegraphics[width=\linewidth]{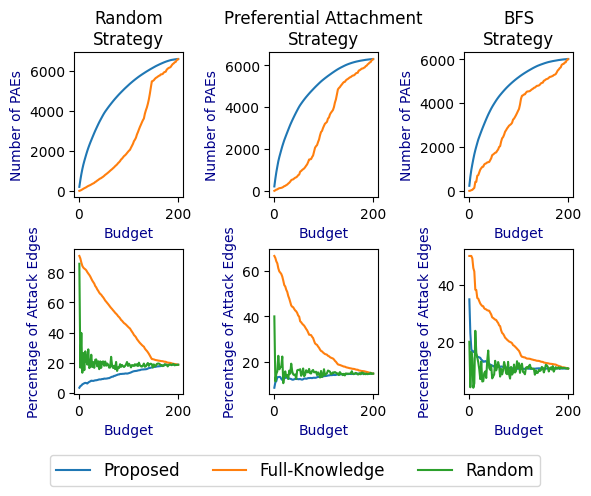}
        \caption{Twitter dataset}
        \label{fig:PAE_twitter}
    \end{subfigure}
    \hfill
    \begin{subfigure}{0.49\textwidth}
        \centering
        \includegraphics[width=\linewidth]{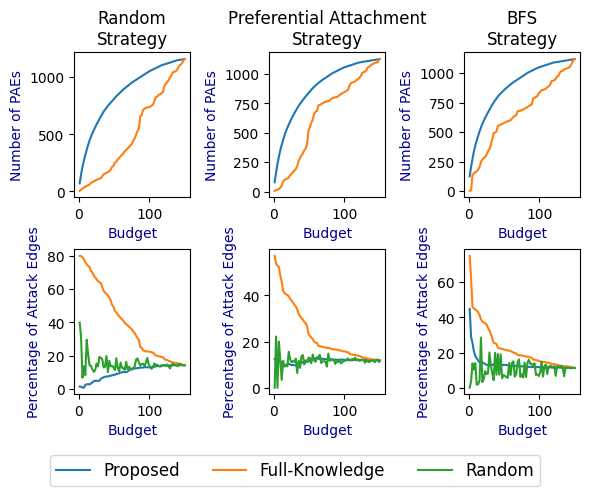}
        \caption{LastFM dataset}
        \label{fig:PAE_lastfm}
    \end{subfigure}
    
    \vspace{1em} 
    \begin{subfigure}{0.49\textwidth}
        \centering
        \includegraphics[width=\linewidth]{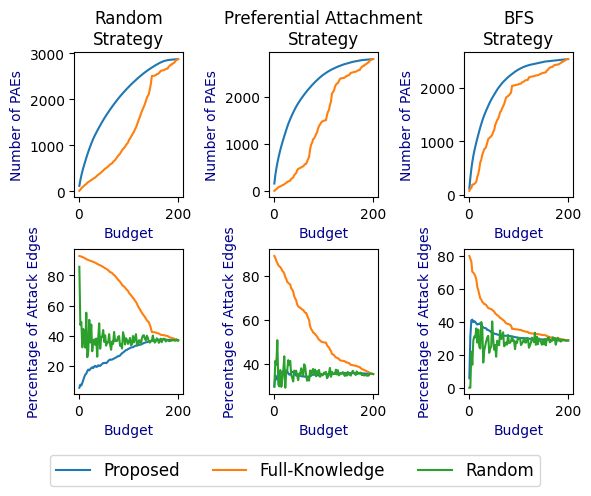}
        \caption{Pokec dataset}
        \label{fig:PAE_pokec}
    \end{subfigure}
    
    \caption{The number of discovered PAEs for different budgets (first row) and the percentage of attack edges relative to the number of discovered PAEs (second row) on \textit{Twitter} (a), \textit{LastFM} (b), and \textit{Pokec} (c) datasets. Each column represents a different attack strategy.
    }
    \Description{The figure has two rows and three columns of plots, each column representing a different attack strategy. The first row shows the number of discovered PAEs based on budget size, while the second row shows the percentage of attack edges relative to the discovered PAEs for each budget size.}
    \label{fig:PAE_all_datasets}
\end{figure*}

\section{Additional Experiments for Classification Algorithms}
\label{appendix:Other Experiments Classification}

For other node classification experiments on Twitter, LastFM and Pokec dataset, Table~\ref{table:combined_results} is provided.

\begin{table*}[ht!]
    \centering
    \caption{Performance of SybilSCAR, SybilWalk, and SybilMetric with and without preprocessing on the \textit{Twitter}, \textit{LastFM}, and \textit{Pokec} datasets. \textbf{Init} represents the setup without preprocessing. \textbf{MB} incorporates discovered benigns by the Traversing algorithm. \textbf{MB+PAE} incorporates both discovered benigns and PAEs.}
    \begin{tabular}{|c|c|c|c|c|c|}
        \hline
        \textbf{Dataset} & \makecell{\textbf{Attack} \\ \textbf{Strategy}} & \textbf{Step} & \makecell{\textbf{SybilSCAR} \\ \textbf{AUC}} & \makecell{\textbf{SybilWalk} \\ \textbf{AUC}} & \makecell{\textbf{SybilMetric} \\ \textbf{AUC}} \\
        \midrule
        \multirow{9}{*}{Twitter} & \multirow{3}{*}{Random} & Init & 0.901 & 0.918 & 0.82 \\
        \cline{3-6}
        & & MB & 0.935 & 0.972 & 0.95 \\
        \cline{3-6}
        & & MB+PAE & 0.936 & 0.973 & 0.92 \\
        \cline{2-6}
        & \multirow{3}{*}{BA} & Init & 0.891 & 0.889 & 0.83 \\
        \cline{3-6}
        & & MB & 0.934 & 0.961 & 0.96 \\
        \cline{3-6}
        & & MB+PAE & 0.936 & 0.966 & 0.96 \\
        \cline{2-6}
        & \multirow{3}{*}{BFS} & Init & 0.909 & 0.931 & 0.75 \\
        \cline{3-6}
        & & MB & 0.942 & 0.977 & 0.94 \\
        \cline{3-6}
        & & MB+PAE & 0.946 & 0.981 & 0.95 \\
        \hline \hline
        
        \multirow{9}{*}{LastFM} & \multirow{3}{*}{Random} & Init & 0.951 & 0.952 & 0.90 \\
        \cline{3-6}
        & & MB & 0.958 & 0.966 & 0.95 \\
        \cline{3-6}
        & & MB+PAE & 0.960 & 0.966 & 0.94 \\
        \cline{2-6}
        & \multirow{3}{*}{BA} & Init & 0.932 & 0.944 & 0.91 \\
        \cline{3-6}
        & & MB & 0.937 & 0.957 & 0.94 \\
        \cline{3-6}
        & & MB+PAE & 0.947 & 0.959 & 0.90 \\
        \cline{2-6}
        & \multirow{3}{*}{BFS} & Init & 0.948 & 0.971 & 0.88 \\
        \cline{3-6}
        & & MB & 0.953 & 0.980 & 0.95 \\
        \cline{3-6}
        & & MB+PAE & 0.961 & 0.981 & 0.92 \\
        \hline \hline
        
        \multirow{9}{*}{Pokec} & \multirow{3}{*}{Random} & Init & 0.895 & 0.924 & 0.90 \\
        \cline{3-6}
        & & MB & 0.916 & 0.946 & 0.94 \\
        \cline{3-6}
        & & MB+PAE & 0.913 & 0.948 & 0.89 \\
        \cline{2-6}
        & \multirow{3}{*}{BA} & Init & 0.852 & 0.916 & 0.90 \\
        \cline{3-6}
        & & MB & 0.893 & 0.936 & 0.94 \\
        \cline{3-6}
        & & MB+PAE & 0.901 & 0.940 & 0.93 \\
        \cline{2-6}
        & \multirow{3}{*}{BFS} & Init & 0.903 & 0.953 & 0.86 \\
        \cline{3-6}
        & & MB & 0.936 & 0.968 & 0.92 \\
        \cline{3-6}
        & & MB+PAE & 0.937 & 0.968 & 0.90 \\
        \hline
    \end{tabular}
    
    \label{table:combined_results}
\end{table*}
\clearpage

\end{document}